\newenvironment{customthm}[1]
  {\innercustomthm}
  {\endinnercustomthm}
\newcommand{\floor}[1]{\left\lfloor #1 \right\rfloor}   
\newcommand{\eps}{\varepsilon}
\newcommand{\preq}{\preccurlyeq}
\newcommand{\uu}{u}
\newcommand{\jjj}{{\floor{\log i}}}
\newcommand{\myalpha}[2]{\alpha_{#1,#2}}
\DeclareMathOperator{\alp}{\rm alph}
\DeclareMathOperator{\0}{\bf 0}
\DeclareMathOperator{\1}{\bf 1}
\newtheorem{thm}{Theorem}
\newtheorem{lem}[thm]{Lemma}
\newtheorem{prop}[thm]{Proposition}
\newtheorem{cor}[thm]{Corollary}
\newdefinition{rmk}{Remark}
\newdefinition{example}{Example}
\newproof{pf}{Proof}
\newproof{pot}{Proof of Theorem~\ref{thm03}}
\journal{}
\begin{document}
\begin{frontmatter}

\title{Alternating Towers and Piecewise Testable Separators}

\author[cu]{\v St\v ep\'an Holub\fnref{sh}}
\ead{holub@karlin.mff.cuni.cz}

\author[tud,mu]{Tom\'{a}\v{s} Masopust\corref{cor1}\fnref{tm}}
\ead{masopust@math.cas.cz}

\author[tud]{Micha\"el~Thomazo\fnref{mt}}
\ead{michael.thomazo@tu-dresden.de}

\fntext[sh]{Research supported by the Czech Science Foundation grant number 13-01832S}
\address[cu]{Charles University, Sokolovsk\'a 83, 175 86 Praha, Czech Republic}

\fntext[tm]{Research supported by the DFG in grant KR~4381/1-1}
\address[tud]{TU Dresden, Germany}
\address[mu]{Institute of Mathematics, Czech Academy of Sciences, {\v Z}i{\v z}kova 22, 616 62 Brno, Czech Republic}

\fntext[mt]{Research supported by the Alexander von Humboldt Foundation}

\cortext[cor1]{Corresponding author}

\begin{abstract}
  Two languages are separable by a piecewise testable language if and only if there exists no infinite tower between them. An infinite tower is an infinite sequence of strings alternating between the two languages such that every string is a subsequence (scattered substring) of all the strings that follow. For regular languages represented by nondeterministic finite automata, the existence of an infinite tower is decidable in polynomial time. In this paper, we investigate the complexity of a particular method to compute a piecewise testable separator. We show that it is closely related to the height of maximal finite towers, and provide the upper and lower bounds with respect to the size of the given nondeterministic automata. Specifically, we show that the upper bound is polynomial with respect to the number of states with the cardinality of the alphabet in the exponent. Concerning the lower bound, we show that towers of exponential height with respect to the cardinality of the alphabet exist. Since these towers mostly turn out to be sequences of prefixes, we also provide a comparison with towers of prefixes.
\end{abstract}

\begin{keyword}
  Separability \sep separators \sep piecewise testable languages \sep alternating towers \sep complexity
  \MSC[2010] 68Q45 \sep 68Q17 \sep 68Q25 \sep 03D05
\end{keyword}

\end{frontmatter}

\section{Introduction}
  The separation problem appears in many disciplines of mathematics and computer science, such as algebra~\cite{Almeida2008486,AlmeidaZ-ita97,PerrinPin}, logic~\cite{PlaceZ_icalp14,PlaceZ14}, formal languages~\cite{Choffrut2007,mfcsPlaceRZ13}, learning theory~\cite{learning}, and recently also in databases and query answering~\cite{icalp2013}. The latter topic is our original motivation we have investigated since the work of~\cite{icalp2013}. In that work, the motivation comes from two practical problems. The first is to make XML Schema human-readable. XML Schema is a schema language for XML that is widely accepted and supported, but is rather machine-readable than human-readable. Of course, it increases the expressiveness of Document Type Definition~\cite{xml}, but this increase goes hand in hand with loss of simplicity. Recently, the BonXai schema language has been proposed in~\cite{MartensNNS-vldb12} as an attempt to design a human-readable schema language. The BonXai schema is a set of rules $L_i \to R_i$, where $L_i$ and $R_i$ are regular expressions. An XML document (unranked tree) belongs to the language of the schema if for every node the labels of its children form a string belonging to $R_k$ and its ancestors form a string belonging to $L_k$, see~\cite{MartensNNS-vldb12} for more details. The problem we faced when translating (the finite automaton embedded in) an XML Schema Definition into an equivalent BonXai schema was that the automatically generated regular expressions $L_i$ were not human-readable. Therefore, we restricted the considered regular expressions to some ``simple'' variants, e.g., piecewise testable languages~\cite{icalp2013}.

  The second motivation comes from the observation that regular expressions are used to match paths between nodes in a graph, hence their efficient evaluation is relevant in database and knowledge-base systems. However, the graph (database, knowledge base) can be so huge that the exact evaluation is not feasible in a reasonable time~\cite{LosemannM-pods12,PerezAG-jws10}. As a solution, the expression could be rewritten to (an)other expression(s) that can be evaluated efficiently. For instance, it could be rewritten to two expressions $r_{+}$ and $r_{-}$ defining the languages that should and should not be matched in the answer. The question is then whether there exists such a ``simple'' query, and how to obtain it.

  It is not hard to see that the previous problems reduce to the language separation problem. Given two languages $K$ and $L$ and a family of languages $\mathcal{F}$, the problem asks whether there exists a language $S$ in $\mathcal{F}$ such that $S$ includes one of the languages $K$ and $L$, and is disjoint with the other. Recently, it has been independently shown in~\cite{icalp2013} and~\cite{mfcsPlaceRZ13} that the separation problem for two regular languages represented by NFAs and the family of piecewise testable languages is decidable in polynomial time with respect to both the number of states and the size of the alphabet. It should be noted that an algorithm polynomial with respect to the number of states and exponential with respect to the size of the alphabet has been known in the literature~\cite{Almeida-jpaa90,AlmeidaZ-ita97}. In~\cite{icalp2013}, the separation problem has been shown to be equivalent to the non-existence of an infinite tower between the languages. Namely, the languages have been shown separable by a piecewise testable language if and only if there does not exist an infinite tower between them. In~\cite{mfcsPlaceRZ13}, another technique has been used to prove the polynomial time bound for the decision procedure, and a doubly exponential upper bound on the index of the separating piecewise testable language. This information can be further used to construct a separating piecewise testable language with the complexity, in general, exponential with respect to the index.
  
  However, there exists a simple method (in the meaning of description) to decide the separation problem and to compute a separating piecewise testable language, whose running time corresponds to the height of the highest finite tower. This method is the original work of~\cite{pc2013} and is recalled in Section~\ref{secAlg}. The relationship between the complexity and the height of towers has motivated the study of this paper to investigate the upper and lower bounds on the height of finite towers in the presence of no infinite towers. So far, to the best of our knowledge, the only published result in this direction is a paper by Stern~\cite{Stern-tcs85}, who provided an exponential upper bound $2^{{|\Sigma|}^2 N}$ on the height of alternating towers between a piecewise testable language and its complement, where $N$ denotes the number of states of the minimal deterministic finite automaton. 

  Our contribution in this paper, which is a major revision and extension of~\cite{mfcs2014}, are the upper and lower bounds on the height of maximal finite towers between two regular languages represented by nondeterministic finite automata. Since the existence of towers of arbitrary height implies the existence of an infinite tower~\cite{icalp2013}, we restrict our attention only to the case where no infinite tower exists between the languages. We prove that the upper bound is polynomial with respect to the number of states, but exponential with respect to the size of the alphabet (Theorem~\ref{thm03}). Concerning the lower bound, we first improve the previous result showing that the bound is tight for binary regular languages up to a linear factor (Theorem~\ref{thm:quadratic}). The main result then shows that we can achieve an exponential lower bound for NFAs with respect to the size of the alphabet (Theorems~\ref{thm:exp} and~\ref{thm:2exp}). The lower bound for DFAs is discussed in Theorems~\ref{thm:expdfa} and~\ref{determinisation}. Since our towers for NFAs are in fact sequences of prefixes, we investigate the towers of prefixes in Section~\ref{TofPref}. We prove tight upper bounds on the height of towers of prefixes in Theorem~\ref{thm:dfas} and Corollary~\ref{cor:nfas}, provide a pattern that characterizes the existence of an infinite tower of prefixes (Theorem~\ref{patern}), and show that the problem is NL-complete for both NFAs and DFAs (Theorem~\ref{nl-complete} and Corollary~\ref{nl-completeCor}). Finally, Section~\ref{RelRes} provides an overview of related results. To complete it, we prove that the piecewise-testability problem is PSPACE-complete for NFAs (Theorem~\ref{pspace-complete}) and that separability of regular languages (represented by NFAs or DFAs) by piecewise testable languages is P-complete (Theorem~\ref{p-complete}).

\section{Preliminaries}
  We assume that the reader is familiar with automata and formal language theory~\cite{lawson2003finite,RozSal,sipser}. The cardinality of a set $A$ is denoted by $|A|$ and the power set of $A$ by $2^A$. An alphabet $\Sigma$ is a finite nonempty set. The elements of an alphabet are called letters. The free monoid generated by $\Sigma$ is denoted by $\Sigma^*$. A string over $\Sigma$ is any element of $\Sigma^*$. The empty string is denoted by $\eps$. For a string $w\in\Sigma^*$, $\alp(w)\subseteq\Sigma$ denotes the set of all letters occurring in $w$, and $|w|_a$ denotes the number of occurrences of letter $a$ in $w$.

  \paragraph{Automata}
  A {\em nondeterministic finite automaton\/} (NFA) is a quintuple $M = (Q,\Sigma,\delta,Q_0,F)$, where $Q$ is the finite nonempty set of states, $\Sigma$ is the input alphabet, $Q_0\subseteq Q$ is the set of initial states, $F\subseteq Q$ is the set of accepting states, and $\delta:Q\times\Sigma\to 2^Q$ is the transition function. The transition function is extended to the domain $2^Q\times\Sigma^*$ in the usual way. The language {\em accepted\/} by $M$ is the set $L(M) = \{w\in\Sigma^* \mid \delta(Q_0, w) \cap F \neq\emptyset\}$.   
  A {\em path\/} $\pi$ from a state $q_0$ to a state $q_n$ under a string $a_1a_2\cdots a_{n}$, for some $n\ge 0$, is a sequence of states and input letters $q_0, a_1, q_1, a_2, \ldots, q_{n-1}, a_{n}, q_n$ such that $q_{i+1} \in \delta(q_i,a_{i+1})$, for all $i=0,1,\ldots,n-1$. The path $\pi$ is {\em accepting\/} if $q_0\in Q_0$ and $q_n\in F$, and it is {\em simple\/} if the states $q_0,q_1,\ldots,q_n$ are pairwise distinct.
  The number of states on the longest simple path in $M$ is called the {\em depth\/} of the automaton $M$.
  We use the notation $q_0 \xrightarrow{a_1a_2\cdots a_{n}} q_{n}$ to denote that there exists a path from $q_0$ to $q_n$ under the string $a_1a_2\cdots a_{n}$.
  The NFA $M$ has a {\em cycle over an alphabet $\Gamma\subseteq\Sigma$\/} if there exists a state $q$ and a string $w$ over $\Sigma$ such that $\alp(w)=\Gamma$ and $q\xrightarrow{w} q$.
  
  The NFA $M$ is {\em deterministic\/} (DFA) if $|Q_0|=1$ and $|\delta(q,a)|=1$ for every $q$ in $Q$ and $a$ in $\Sigma$. We identify singleton sets with their elements and write $q$ instead of $\{q\}$. The transition function $\delta$ is then a map from $Q\times\Sigma$ to $Q$ that is extended to the domain $Q\times\Sigma^*$ in the usual way. Two states of a DFA are {\em distinguishable\/} if there exists a string $w$ that is accepted from one of them and rejected from the other, otherwise they are {\em equivalent}. A DFA is {\em minimal\/} if all its states are reachable and pairwise distinguishable. 

  In this paper, we assume that all automata under consideration have no useless states, that is, every state appears on an accepting path.

  \paragraph{Embedding}
  For two strings $v = a_1 a_2 \cdots a_n$ and $w \in \Sigma^* a_1 \Sigma^* a_2 \Sigma^* \cdots \Sigma^* a_n \Sigma^*$, we say that $v$ is a {\em subsequence\/} of $w$ or that $v$ can be {\em embedded\/} into $w$, denoted by $v \preccurlyeq w$. For two languages $K$ and $L$, we say that the {\em language $K$ can be embedded into the language $L$}, denoted by $K\preccurlyeq L$, if for every string $w$ in $K$, there exists a string $w'$ in $L$ such that $w\preccurlyeq w'$. We say that a {\em string $w$ can be embedded into the language $L$}, denoted by $w\preccurlyeq L$, if $\{w\}\preccurlyeq L$.

  \paragraph{Towers}
  We define {\em (alternating subsequence) towers\/} as a generalization of Stern's alternating towers, cf.~\cite{Stern-tcs85}. For two languages $K$ and $L$ and the subsequence relation $\preccurlyeq$, we say that a sequence $(w_i)_{i=1}^{r}$ of strings is an {\em (alternating subsequence) tower between $K$ and $L$\/} if $w_1 \in K \cup L$ and, for all $i = 1, \ldots, r-1$,
  \begin{itemize}
    \itemsep0pt
    \item $w_i \preccurlyeq w_{i+1}$,
    \item $w_i \in K$ implies $w_{i+1} \in L$, and
    \item $w_i \in L$ implies $w_{i+1} \in K$.
  \end{itemize}
  
  We say that $r$ is the {\em height\/} of the tower. In the same way, we define an infinite sequence of strings to be an {\em infinite (alternating subsequence) tower between $K$ and $L$}. If the languages are clear from the context, we usually omit them. Notice that the languages are not required to be disjoint, however, if there exists a $w \in K \cap L$, then there exists an infinite tower, namely $w, w, w, \ldots$. 
  
  If we talk about a {\em tower between two automata}, we mean the tower between their languages.

  \paragraph{Piecewise testable languages}  
  A regular language is {\em piecewise testable\/} if it is a finite boolean combination of languages of the form $\Sigma^* a_1 \Sigma^* a_2 \Sigma^* \cdots \Sigma^* a_k \Sigma^*$, where $k\ge 0$ and $a_i\in \Sigma$, see~\cite{Simon1972,Simon1975} for more details.

  \paragraph{Separability}  
  Let $K$ and $L$ be two languages. A language $S$ \emph{separates $K$ from $L$\/} if $S$ contains $K$ and does not intersect $L$. Languages $K$ and $L$ are \emph{separable by a family of languages $\mathcal{F}$\/} if there exists a language $S$ in $\mathcal{F}$ that separates $K$ from $L$ or $L$ from $K$.

  \paragraph{Prefixes and towers of prefixes}
  We say that a string $v\in\Sigma^*$ is a prefix of a string $w\in\Sigma^*$, denoted by $v \le w$, if $w=vu$, for some string $u\in\Sigma^*$. A sequence $(w_i)_{i=1}^{r}$ of strings is a {\em tower of prefixes between two languages $K$ and $L$\/} if $w_1 \in K \cup L$ and, for all $i = 1,2, \ldots, r-1$,
    $w_i \le w_{i+1}$,
    $w_i \in K$ implies $w_{i+1} \in L$, and
    $w_i \in L$ implies $w_{i+1} \in K$.

\section{Relevant results}\label{RelRes}
  In this section, we first briefly summarize the results concerning piecewise testable languages and separability that are relevant to the topic of this paper.

  Piecewise testable languages were studied by Simon in his PhD thesis~\cite{Simon1972}, see also~\cite{Simon1975}. He proved that piecewise testable languages are exactly those regular languages whose syntactic monoid is $\mathcal{J}$-trivial. He also provided various characterizations of piecewise testable languages in terms of monoids, automata, etc.
  These languages found applications in algebra~\cite{Almeida2008486,AlmeidaZ-ita97}, logic~\cite{PlaceZ_icalp14,PlaceZ14}, formal languages~\cite{icalp2013,mfcsPlaceRZ13} and learning theory~\cite{learning}, to mention a few.

  The fundamental question was how to efficiently recognize whether a given regular language is piecewise testable. The solution to this problem was provided by Stern in 1985 and improved by Trahtman in 2001. Stern showed that piecewise testability of a regular language represented by a deterministic finite automaton is decidable in polynomial time~\cite{Stern85a}. He provided an $O(n^5)$ algorithm, where $n$ is the number of states. Trahtman~\cite{Trahtman2001} improved Stern's result and obtained a quadratic-time algorithm to decide piecewise testability for deterministic finite automata. In 1991, Cho and Huynh~\cite{ChoH91} proved that piecewise testability is NL-complete for deterministic finite automata. To the best of our knowledge, the precise complexity of the problem for languages represented by nondeterministic finite automata has not yet been discussed in the literature. It is not hard to see that the problem is in PSPACE. We show below that it is also PSPACE-hard.
  
  Piecewise testable languages find a growing interest in separability, namely as the separating languages.
  In 1997, Almeida and Zeitoun~\cite{AlmeidaZ-ita97} developed an algorithm based on the computation of $\mathcal{J}$-closures to decide separability of regular languages represented by deterministic finite automata by piecewise testable languages. Their algorithm is polynomial with respect to the number of states, but exponential with respect to the cardinality of the alphabet. Although the algorithm is formulated for deterministic finite automata, it can be modified for nondeterministic automata.
  In 2013, Czerwi\'nski, Martens, Masopust~\cite{icalp2013} and, independently, Place, Van Rooijen, Zeitoun~\cite{mfcsPlaceRZ13} provided polynomial-time algorithms (with respect to both the size of the state space and the cardinality of the alphabet) to decide separability of regular languages represented by nondeterministic finite automata by piecewise testable languages.
  In this section, we show that separability of regular languages represented by NFAs or DFAs by piecewise testable languages is P-complete.

  It should be mentioned that separability has also been studied for other languages. For instance, separability of context-free languages by regular languages was shown undecidable a long time ago, cf.~\cite{Hunt82a}. In particular, it is shown in~\cite{Hunt82a} that even separability of simple context-free languages (so-called s-languages) by definite languages (a strict subfamily of regular languages) is undecidable.

\subsection{The piecewise-testability problem for NFAs is PSPACE-complete}
  The {\em piecewise-testability problem\/} asks whether, given a nondeterministic finite automaton $A$ over an alphabet $\Sigma$, the language $L(A)$ is piecewise testable. Although the containment to PSPACE follows basically from the result by Cho and Huynh~\cite{ChoH91}, we prefer to provide the proof here for two reasons. First, we would like to provide an unfamiliar reader with a method to recognize whether a regular language is piecewise testable. Second, Cho and Huynh assume that the input is the minimal DFA, hence it is necessary to extend their algorithm with a non-equivalence check.
  
  \begin{prop}[Cho and Huynh~\cite{ChoH91}]\label{ChoHcharacterization}
    A regular language $L$ is not piecewise testable if and only if the minimal DFA $M$ for $L$ either (1) contains a nontrivial (non-self-loop) cycle or (2) there are three distinct states $p$, $q$, $q'$ such that there are paths from $p$ to $q$ and from $p$ to $q'$ in the graph $G(M, \Sigma(q) \cap \Sigma(q'))$, where $G(M,\Gamma)$ denotes the transition diagram of the DFA $M$ restricted to edges labeled by letters from $\Gamma$, and $\Sigma(q)=\{ a\in\Sigma \mid q\xrightarrow{a} q\}$.
  \end{prop}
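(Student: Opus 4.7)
The plan is to prove the proposition via Simon's theorem, using the concrete characterization that $L$ is piecewise testable iff there exists $k$ such that any two words with identical sets of subsequences of length at most $k$ determine the same Myhill--Nerode class in $M$. Since $M$ is minimal, distinct states are distinguishable by some continuation, so an equivalent way to contradict piecewise testability is to exhibit two words with identical $k$-subsequences that reach different states of $M$.

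For the ``if'' direction (each of (1), (2) forces $L$ not to be piecewise testable) I argue by contradiction. Under (1), fix a cycle $q_0 \xrightarrow{a_1} q_1 \to \cdots \to q_n = q_0$ with, say, $q_0 \neq q_1$, put $w = a_1 a_2 \cdots a_n$ and $w' = a_2 \cdots a_n a_1$, so $\alp(w) = \alp(w') = \Gamma$, and take any $\alpha$ leading to $q_0$ from an initial state. For $m$ large enough relative to $k$, the words $\alpha w^m$ and $\alpha w^m a_1 (w')^m$ share the same set of $k$-subsequences — each is $\alpha$ followed by a $\Gamma$-word that realises every $\Gamma$-subsequence up to length $k$ — yet they reach $q_0$ and $q_1$ respectively, contradicting PT. Under (2), with $p \xrightarrow{u} q$ and $p \xrightarrow{v} q'$ in $G(M,\Gamma)$ for $\Gamma = \Sigma(q) \cap \Sigma(q')$, I replace these by $uv^m$ and $vu^m$, using that every letter of $v$ self-loops at $q$ and every letter of $u$ self-loops at $q'$; the $k$-subsequence equality is obtained analogously and again leads to distinguishable endpoints.

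For the converse, assuming neither (1) nor (2), I would show that the transition monoid of $M$ is $\mathcal{J}$-trivial, so by Simon's theorem $L$ is piecewise testable. The absence of nontrivial cycles gives a partial order on states once self-loops are contracted, while the negation of (2) provides the local confluence property: two walks from $p$ over a common self-loop alphabet must end at the same state. An induction along the partial order then rewrites any word acting on $M$ into a canonical form built from self-loops at the states it visits, forcing $\mathcal{J}$-equivalent elements of the transition monoid to act identically on all states and hence to be equal.

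The main obstacle will be this converse: the conditions control only what happens near individual states, whereas $\mathcal{J}$-triviality is a global property of the monoid. Propagating confluence from single steps through arbitrary compositions requires a careful pumping/normalisation argument that tracks how the self-loop alphabet accumulates along a path. The ``if'' direction, in contrast, is purely combinatorial and reduces to producing two $k$-subsequence-equivalent words that reach distinguishable states in the minimal DFA.
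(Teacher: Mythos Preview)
The paper does not prove this proposition; it is quoted from Cho and Huynh and used as a black box in the subsequent lemma. There is therefore no in-paper argument to compare your proposal against.

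On the proposal itself: your treatment of condition~(1) is correct. For condition~(2), however, the specific pair $uv^m$, $vu^m$ does not work in general. Take $u=a$, $v=b$; then $ab^m$ and $ba^m$ do not share the same $2$-subsequences (for instance $ba$ embeds only into the latter). The fix is immediate: use $(uv)^m$ and $(vu)^m$ instead. Since every letter of $u$ and of $v$ lies in $\Gamma=\Sigma(q)\cap\Sigma(q')$, from $p$ the word $(uv)^m$ still ends at $q$ (after $u$ reaches $q$, everything self-loops) and $(vu)^m$ ends at $q'$; and for $m\ge k$ both words contain every $\Gamma$-word of length at most $k$ as a subsequence, so $\alpha(uv)^m$ and $\alpha(vu)^m$ are $k$-equivalent yet reach distinguishable states of the minimal DFA.

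The converse you sketch --- no~(1) and no~(2) imply $\mathcal J$-triviality of the transition monoid --- is indeed where all the work lies, and your outline (partial order from acyclicity, local confluence from the negation of~(2), then a normal-form argument) is the standard route taken in the literature. Carrying it out rigorously requires tracking how the self-loop alphabets accumulate along a path and is genuinely nontrivial; the present paper does not attempt it and simply cites the result.
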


  \begin{lem}
    The piecewise-testability problem for NFAs is in PSPACE.
  \end{lem}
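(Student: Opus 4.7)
The plan is to invoke Proposition~\ref{ChoHcharacterization} and verify its conditions on the minimal DFA for $L(A)$ \emph{implicitly}, without ever constructing that DFA. Given the input NFA $A=(Q,\Sigma,\delta,Q_0,F)$, the states of the minimal DFA correspond to equivalence classes of reachable subsets $S\subseteq Q$ under the relation $S\equiv T$ iff $L(A_S)=L(A_T)$, where $A_S$ denotes $A$ started in $S$. Each subset uses only $|Q|$ bits, and inequivalence $S\not\equiv T$ can be witnessed by a distinguishing word guessed letter by letter; this is the standard nondeterministic polynomial-space procedure underlying PSPACE-hardness of NFA equivalence.

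To place the problem in PSPACE, I would show that its complement is in NPSPACE by nondeterministically guessing a witness that one of the two Cho--Huynh forbidden configurations holds. For condition (1), I would guess two reachable subsets $S\not\equiv T$, together with two nonempty words $u,v$ traced one letter at a time, verifying on the fly that $\delta(S,u)\equiv T$ and $\delta(T,v)\equiv S$, which forces $S,T$ onto a common nontrivial cycle of the minimal DFA. For condition (2), I would guess three reachable, pairwise inequivalent subsets $S_p,S_q,S_{q'}$, then compute $\Gamma=\Sigma(q)\cap\Sigma(q')$ by testing for each $a\in\Sigma$ whether $\delta(S_q,a)\equiv S_q$ and $\delta(S_{q'},a)\equiv S_{q'}$, and finally guess two words over $\Gamma$ that witness paths from $S_p$ to $S_q$ and from $S_p$ to $S_{q'}$ in the graph restricted to $\Gamma$. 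Reachability of each guessed subset from $Q_0$ is checked similarly, by guessing an arbitrary word that leads $Q_0$ to it. All of this needs to keep only a constant number of subsets plus a bit of bookkeeping, so the working memory is polynomial in $|Q|$.

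The main obstacle is that every single transition step of the minimal DFA must be simulated by an equivalence test between two NFA subsets, and this inequivalence test is itself PSPACE-complete. The way around it is that we only ever need the \emph{positive side} (inequivalence or equivalence guessed within the existential flow), so each such test is an NPSPACE subroutine that can reuse the same polynomial workspace. Since nondeterministic calls are composed, the overall procedure lies in NPSPACE, and Savitch's theorem together with closure of PSPACE under complement yields the desired PSPACE upper bound.
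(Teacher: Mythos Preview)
Your approach is correct and follows the same overall strategy as the paper: invoke the Cho--Huynh characterization, represent states of the determinized automaton as subsets of $Q$, guess the forbidden configuration in NPSPACE, and conclude via Savitch and closure under complement.

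The one difference worth noting is that the paper works directly with the subset automaton $A'$ rather than with the minimal DFA. Transitions in $A'$ are exact subset equalities, so the paper's algorithm guesses subsets $X,Y$ (and $P$), traces paths in $A'$ letter by letter with ordinary subset transitions, and performs only a single \emph{inequivalence} test at the end to certify that $X$ and $Y$ map to distinct states of the minimal DFA. You instead simulate the minimal DFA itself, which forces you to interleave \emph{equivalence} tests ($\delta(S,u)\equiv T$, $\delta(S_q,a)\equiv S_q$ for computing $\Gamma$, etc.) throughout the procedure. That is still in PSPACE, since NFA equivalence lies in $\mathrm{coNPSPACE}=\mathrm{PSPACE}$ and can be run as a deterministic subroutine inside the outer nondeterministic loop---but your remark that ``we only ever need the positive side'' is not quite right: equivalence is a universal property, and what actually rescues the argument is Savitch (or Immerman--Szelepcs\'enyi), which you do invoke at the end. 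The paper's formulation avoids this nesting of space-bounded oracle calls altogether.
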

  \begin{proof}
    Let $A=(Q,\Sigma,\delta,Q_0,F)$ be an NFA. Since the automaton is nondeterministic, we cannot directly use the algorithm by Cho and Huynh~\cite{ChoH91}. However, we can consider the DFA $A'$ obtained from $A$ by the standard subset construction. The states of $A'$ are subsets of states of $A$. Now we only need to modify Cho and Huynh's algorithm to check whether the guessed states are distinguishable.
    \begin{algorithm}
      \caption{Non-piecewise testability (symbol $\rightsquigarrow$ stands for reachability)}
      \label{alg}
      \begin{algorithmic}[1]
        \State Guess states $X,Y\subseteq Q$ of $A'$;
          \Comment{Verify property (1)}
        \If {$Q_0 \rightsquigarrow X \rightsquigarrow Y \rightsquigarrow X$}
          \State go to line~\ref{noneq};
        \EndIf

        \State Guess states $P,X,Y\subseteq Q$ of $A'$; 
          \Comment{Verify property (2)}
        \State Check $Q_0\rightsquigarrow P$, $Q_0\rightsquigarrow X$, and $Q_0\rightsquigarrow Y$;
        \State $s_1 := P$; \qquad $s_2 := P$;
        \Repeat { guess} $a,b\in \Sigma(X)\cap \Sigma(Y)$;
          \State $s_1 := \delta(s_1,a)$;
          \State $s_2 := \delta(s_2,b)$;
        \Until {$s_1 = X$ and $s_2 = Y$};

        \State Guess states $X',Y'\subseteq Q$ of $A'$ such that $X'\cap F \neq\emptyset$ and $Y'\cap F=\emptyset$; 
          \Comment {Check that $X$ and $Y$ are not equivalent}
          \label{noneq}
        \State $s_1 := X$; \qquad $s_2 := Y$; 
        \Repeat { guess} $a\in\Sigma$;         
          \State $s_1 := \delta(s_1,a)$;          
          \State $s_2 := \delta(s_2,a)$;         
        \Until {$s_1 = X'$ and $s_2 = Y'$};
          
        \State\Return 'yes';
      \end{algorithmic}\label{algpspace}
    \end{algorithm}
    
    The entire algorithm is presented as Algorithm~\ref{algpspace}. In line~1 it guesses two states, $X$ and $Y$, of $A'$ that are verified to be reachable and in a cycle in lines 2-4. If so, it is verified in lines 12-17 that the states $X$ and $Y$ are not equivalent in $A'$. If there is no cycle in $A'$, property (2) of the proposition is verified in lines 5-11, and the guessed states $X$ and $Y$ are verified to be non-equivalent in lines 12-17. 
    
    The algorithm is in NPSPACE=PSPACE~\cite{Savitch1970} and returns a positive answer if and only if $A$ does not accept a piecewise testable language. Since PSPACE is closed under complement, piecewise testability is in PSPACE.
  \end{proof}

  \begin{lem}
    The piecewise-testability problem for NFAs is PSPACE-hard.
  \end{lem}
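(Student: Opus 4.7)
The plan is to give a polynomial-time many-one reduction from the universality problem for NFAs (PSPACE-complete even over a binary alphabet) to the piecewise-testability problem for NFAs. Given an NFA $A$ over $\Sigma$, I would construct in polynomial time an NFA $A'$ over an alphabet $\Sigma'$ augmented with one or two fresh separator letters, so that $L(A')$ is piecewise testable if and only if $L(A)=\Sigma^{*}$. The NFA $A'$ is assembled as a union of (a minor modification of) $A$ with a small ``rescue'' gadget and a $(ab)$-style loop component; its size is polynomial by construction.

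For the forward direction, when $L(A)=\Sigma^{*}$ the rescue gadget absorbs the whole language into something trivially piecewise testable, for instance the full $\Sigma'^{*}$ or a basic subword language like $\Sigma'^{*}\#\Sigma'^{*}$, which one verifies by writing down an explicit boolean combination of subword patterns. For the converse, taking the contrapositive, any witness $w\in\Sigma^{*}\setminus L(A)$ is propagated into the minimal DFA of $L(A')$ so as to manifest either a nontrivial non-self-loop cycle or a Cho--Huynh property-(2) triple of distinct states $p,q,q'$. Proposition~\ref{ChoHcharacterization} then certifies that $L(A')$ is not piecewise testable. Corner cases such as $L(A)=\emptyset$ are handled by an initial polynomial-time (indeed NL) check and, if they arise, by outputting a fixed non-piecewise-testable NFA, for example one accepting $(ab)^{+}$.

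The main obstacle lies in designing the rescue gadget so that both directions are tight. On the one hand, the gadget must not introduce accidental non-piecewise-testability when $A$ is universal, so the positive case really does collapse to a boolean combination of subword patterns; on the other hand, the gadget must keep the non-universality witness separated from the ``universally accepted'' part, so that the associated residuals of $L(A')$ genuinely differ and force a structural obstruction. The delicate step is the Myhill--Nerode analysis that extracts a nontrivial cycle (or a Cho--Huynh branching triple) from $w$: one exhibits two prefixes whose residuals agree on strings containing the separator but differ on a $\Sigma^{*}$-suffix computable from $w$, which is exactly where the separators earn their keep by localising the effect of the witness. Once such a cycle or triple is produced, the contrapositive closes via Proposition~\ref{ChoHcharacterization}.
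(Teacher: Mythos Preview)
Your plan is correct and follows essentially the same route as the paper: reduce from NFA universality, introduce a fresh letter, and invoke the Cho--Huynh characterization for the non-universal case, with the $L(A)=\emptyset$ corner case handled separately.

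The paper's concrete construction, however, is simpler than the machinery you anticipate. It adds a single fresh ``reset'' letter $x$: after completing $A$ with a non-accepting sink $d$, an $x$-transition is added from \emph{every} state back to the initial state $q_0$. If $L(A)=\Sigma^*$, then $L(A')=(\Sigma\cup\{x\})^*$, which is trivially piecewise testable. If $L(A)\neq\Sigma^*$ (and nonempty), the minimal DFA for $L(A')$ has at least two reachable states; since $x$ sends every subset back to $\{q_0\}$, any non-initial state lies on a nontrivial cycle, and non-piecewise-testability follows from property~(1) of Proposition~\ref{ChoHcharacterization} alone. So there is no need for a union with an $(ab)$-loop component, no property-(2) triple, and no delicate Myhill--Nerode residual analysis: the reset letter makes the cycle witness immediate.
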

  \begin{proof}
    We prove PSPACE-hardness by reduction from the universality problem, which is PSPACE-complete~\cite{GareyJ79}. The {\em universality problem\/} asks whether, given an NFA $A$ over $\Sigma$, the language $L(A)=\Sigma^*$.

    Let $A$ be an NFA with a single initial state $q_0$ (this is not a restriction). Check whether $L(A)=\emptyset$ (in linear time). If so, return the minimal DFA $A'$ for the non-piecewise testable language $(aa)^*$. If $L(A)\neq\emptyset$, let $x$ be a new letter, and let $d$ be a new state. We ``complete'' the automaton $A$ in the sense that if no $a$-transition is defined in a state $q$, for $a$ in $\Sigma$, we add an $a$-transition from $q$ to $d$. State $d$ contains self-loops under all letters of $\Sigma$, but not under $x$. Now, we add an $x$-transition from each state, including $d$, to the initial state $q_0$. Let $A'$ denote the resulting automaton.

    If $L(A)=\Sigma^*$, we show that the language $L(A')$ is piecewise testable by showing that $L(A')=(\Sigma\cup\{x\})^*$. Indeed, $L(A)\subseteq L(A')$, so it remains to show that every string containing $x$ is accepted by $A'$. However, let $w=w_1xw_2$, where $w_1\in(\Sigma\cup\{x\})^*$ and $w_2\in\Sigma^*$. By the construction, $w_1x$ leads the automaton back to the initial state, and $w_2$ leads the automaton to an accepting state, because $w_2\in L(A)=\Sigma^*$. Thus, $w\in L(A')$. 

    Assume that $L(A)\neq\Sigma^*$. If $L(A)=\emptyset$, then $L(A')=(aa)^*$, which is not piecewise testable. If $L(A)\neq\emptyset$, consider the minimal DFA $A''$ computed from $A'$ by the standard subset construction and minimization. The DFA has at least two states, otherwise its language is either universal or empty. Every state of $A''$ is a nonempty subset of states of $A'$ (actually it is an equivalence class of such subsets, but we pick one as a representative). The empty set is not reachable because $A'$ is complete. Let $X\neq\{q_0\}$ be a state of $A''$. Then $X$ is reachable from the initial state $\{q_0\}$, and goes back to $\{q_0\}$ under $x$, which means that there is a cycle in the minimal DFA. By (1) of Proposition~\ref{ChoHcharacterization}, the language is not piecewise testable.
  \end{proof}
  
  We have proved the following result.
  \begin{thm}\label{pspace-complete}
    The piecewise-testability problem for NFAs is PSPACE-complete.
  \end{thm}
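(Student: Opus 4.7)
The theorem is an immediate consolidation of the two preceding lemmas, so the ``proof'' is really just an assembly step rather than a fresh argument. My plan is to observe that the first lemma establishes membership in PSPACE by exhibiting an NPSPACE algorithm (Algorithm~\ref{algpspace}) that guesses the witnesses of Proposition~\ref{ChoHcharacterization} on the subset-construction DFA while simultaneously verifying non-equivalence of the guessed states, and then invokes Savitch's theorem together with closure of PSPACE under complement. The second lemma supplies a matching PSPACE lower bound via a log-space reduction from the universality problem for NFAs, which is PSPACE-complete by Garey and Johnson. Combining the two bounds yields PSPACE-completeness.

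Since nothing more is needed, the proof I would write is essentially one sentence: by the two lemmas above, the piecewise-testability problem for NFAs is both in PSPACE and PSPACE-hard, hence PSPACE-complete. The only ``obstacle'' is making sure the reader sees that the hardness reduction in the second lemma is actually log-space computable (so that it witnesses hardness under the standard notion of reduction for PSPACE), but this is clear from the construction, which only adds one sink state, one fresh letter~$x$, and a uniform set of transitions to the initial state.

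\begin{pf}
The statement follows immediately from the two preceding lemmas: piecewise testability of NFAs is in PSPACE and PSPACE-hard, hence PSPACE-complete.
\end{pf}
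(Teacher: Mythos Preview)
Your proposal is correct and matches the paper exactly: the theorem is stated immediately after the two lemmas with the sentence ``We have proved the following result,'' so the paper, like you, treats it as a direct consolidation of the PSPACE membership and PSPACE-hardness lemmas with no additional argument.
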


\subsection{The separation problem is P-complete}  
  The separation problem of two regular languages by a piecewise testable language is known to be equivalent to the non-existence of an infinite tower~\cite{icalp2013} and is decidable in polynomial time~\cite{icalp2013,mfcsPlaceRZ13}. In this section, we show that the problem is P-complete. We reduce the P-complete monotone circuit value problem~\cite{limits}.

  The {\em monotone circuit value problem} (MCVP) is composed of a set of boolean variables (usually called ``gates'') $g_1$, $g_2$, \ldots, $g_n$, whose values are defined recursively by equalities of the forms $g_i = \0$ (then $g_i$ is called a $\0$-gate), $g_i = \1$ ($\1$-gate), $g_i = g_j \wedge g_k$ ($\wedge$-gate), or $g_i = g_j \vee g_k$ ($\vee$-gate), where $j, k < i$. Here $\0$ and $\1$ are symbols representing the boolean values. The aim is to compute the value of $g_n$. Let $f(i)$ be the element of $\{\wedge,\vee,\0,\1\}$ such that $g_i$ is an $f(i)$-gate. For every $\wedge$-gate and $\vee$-gate, we set $\ell(i)$ and $r(i)$ to be the indices such that $g_i=g_{\ell(i)}f(i)g_{r(i)}$ is the defining equality of $g_i$. For a $\0$-gate $g_i$, we set $f(i)=\ell(i)=r(i)=\0$, and we set $f(i) = \ell(i) = r(i) = \1$ if $g_i$ is a  $\1$-gate.

  \begin{thm}\label{p-complete}
    The separation problem of two regular languages represented by NFAs by piecewise testable languages is P-complete. It is P-complete even for regular languages represented by minimal DFAs.
  \end{thm}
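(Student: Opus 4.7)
Containment in P is immediate: the polynomial-time algorithms of~\cite{icalp2013,mfcsPlaceRZ13} decide piecewise-testable separability for two regular languages given by NFAs, and minimal DFAs are a special case. The substantive part is P-hardness, which I would establish by a logspace reduction from the monotone circuit value problem (MCVP), shown P-complete in~\cite{limits}. By the characterization of~\cite{icalp2013}, two languages are PT-separable iff no infinite alternating tower exists between them, so the task is: given gates $g_1, \ldots, g_n$ of types $\0,\1,\wedge,\vee$, construct in logspace two automata $A_K, A_L$ such that an infinite alternating tower between $L(A_K)$ and $L(A_L)$ exists iff $g_n$ evaluates to $\1$.

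The planned encoding uses an alphabet with one letter $a_i$ per gate, and a local gadget per gate wired together according to $\ell(i), r(i)$. For a $\1$-gate, both $A_K$ and $A_L$ contain a shared cycle on $a_i$, which immediately yields the trivial infinite tower $a_i \preq a_i a_i \preq a_i a_i a_i \preq \cdots$. For an $\vee$-gate $g_i = g_{\ell(i)} \vee g_{r(i)}$, the gadget ``lifts'' any infinite tower witnessing $g_{\ell(i)}$ or $g_{r(i)}$ to one using letter $a_i$ by concatenating $a_i$'s around the inner tower strings. The output of the circuit is exposed at gate $g_n$, so that any global infinite alternating tower projects to a local one at $g_n$, which then propagates back down through the gadgets to witness the circuit value.

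The main obstacle is the $\wedge$-gate: I need the witnessing tower at $g_i$ to exist iff \emph{both} inputs' witnessing towers exist, without creating spurious towers. My plan is to force any alternating tower through the $\wedge$-gadget to simultaneously project to infinite towers over $a_{\ell(i)}$ and over $a_{r(i)}$, using a rigid interleaving pattern and disjoint copies of states so that neither input alone admits an unbounded tower at $g_i$. Correctness would then be verified by induction on circuit depth: if $g_i = \1$, the inductive towers for the two inputs can be merged into one at $g_i$; if $g_i = \0$, then in any putative infinite tower one of the two input projections is bounded, and a pumping-style argument forces boundedness at $g_i$ as well.

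For the strengthening to minimal DFAs, I would note that each gadget can be designed deterministically, since the letters $a_i$ uniquely identify which gadget a transition enters, and minimality can be verified directly from the gadget structure (each state has a distinguishing continuation built from the circuit's output gate $g_n$). The whole construction is computable in logspace because each gadget is of constant size and assembling them only requires tracking the current gate index $i$ and the pointers $\ell(i), r(i)$, so the reduction yields P-hardness even when restricted to minimal DFA inputs.
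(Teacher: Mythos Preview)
Your high-level plan---reduce MCVP and invoke the equivalence between PT-separability and the non-existence of an infinite tower---is exactly the paper's strategy. The gap is in the concrete reduction. In your architecture, a $\1$-gate receives a shared cycle on $a_i$ in both automata, and you say this ``immediately yields the trivial infinite tower''. But then any circuit that contains a $\1$-gate reachable from $g_n$ (which is every nontrivial instance) already admits an infinite tower, irrespective of the value of $g_n$: once the $\wedge$-gadget at some ancestor lets a path reach that $\1$-input, the tower can pump there forever without ever visiting the other input. Your proposed fix---a ``rigid interleaving pattern'' that forces both projections at an $\wedge$-gadget to be infinite---is precisely where the whole difficulty of the reduction lies, and nothing in the sketch explains how to realise it with one letter per gate and constant-size local gadgets. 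The ``pumping-style argument'' for the converse direction is likewise not spelled out, and it is not clear it can be made to work in this design.

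The paper avoids this obstacle by a different decomposition. It uses \emph{two} letters $a_i,b_i$ per gate and two globally asymmetric automata rather than per-gate gadgets: automaton $A'$ encodes the circuit as a DAG of states, so that any accepted string traces a single root-to-leaf path from $g_n$ to a terminal $\0$ or $\1$; automaton $B$ ignores $\vee$- and $\1$-gates but, for each $\wedge$-gate $i$, only cycles on the block $a_ib_i$, never on $a_i$ or $b_i$ alone. An infinite tower then forces a common cycle alphabet in both automata; the $B$-constraint makes both $a_i$ and $b_i$ appear for every relevant $\wedge$-gate, and the $A'$-structure then forces both branches of that gate to reach $\1$. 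This global ``path automaton versus constraint automaton'' split is the idea your sketch is missing; minimality of the DFAs is handled separately by padding $A'$ with fresh letters, not by analysing gadget states.
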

  \begin{proof}
    The problem was shown to be in P in~\cite{icalp2013,mfcsPlaceRZ13}. Thus, it remains to prove P-hardness.
    
    Given an instance $g_1, g_2,\ldots, g_n$ of MCVP, we construct two minimal deterministic finite automata $A$ and $B$ using a log-space reduction and prove that there exists an infinite tower between their languages if and only if the circuit evaluates gate $g_n$ to~$\1$. The theorem then follows from the fact that non-separability of two regular languages by a piecewise testable language is equivalent to the existence of an infinite tower~\cite{icalp2013}.
    
    We first construct an automaton $A'$. Let $Q_{A'}= \{s,\0,\1,1,2,\dots,n\}$ and $F_{A'}=\{\0,\1\}$. The initial state of $A'$ is $s$ and the transition function $\delta_{A'}$ is defined by $\delta_{A'}(i,a_i)=\ell(i)$ and $\delta_{A'}(i,b_i)=r(i)$. In addition, there are two special transitions $\delta_{A'}(s,x)=n$ and $\delta_{A'}(\1,y)=s$. 
    To construct the automaton $B$, let $Q_{B}= \{q,t\}\cup\{i\mid f(i)=\wedge\}$ and $F_{B}=\{q\}$, where $q$ is also the initial state of $B$.   If $f(i)=\vee$ or $f(i)=\1$, then $\delta_{B}(t,a_{i})=\delta_{B}(t,b_{i})=t$. If $f(i)=\wedge$, then $\delta_{B}(t,a_{i})=i$ and $\delta_{B}(i,b_{i})=t$.
    Finally, we define $\delta_{B}(q,x)=t$ and $\delta_{B}(t,y)=q$. (As usual, all undefined transitions go to the unique sink states of the respective automata.) An example of this construction for the circuit $g_1=\0$, $g_2=\1$, $g_3=g_1\wedge g_2$, $g_4= g_3 \vee g_3$ is shown in Figure~\ref{figPcomp}.

    The languages $L(A')$ and $L(B)$ are disjoint, the automata $A'$ and $B$ are deterministic, and $B$ is minimal. However, the automaton $A'$ need not be minimal, since the circuit may contain gates that do not contribute to the definition of the value of $g_n$. We therefore define a minimal deterministic automaton $A$ by adding transitions into $A'$, each under a fresh letter, from $s$ to each of $1,2,\dots,n-1$, from each of $1,2,\dots,n$ to state $\0$, and from $\0$ to $\1$. No new transition is defined in $B$. 

    \begin{figure}
      \centering
      \begin{tikzpicture}[baseline,->,>=stealth,shorten >=1pt,node distance=2cm,
        state/.style={circle,minimum size=1mm,very thin,draw=black,initial text=},
        every node/.style={fill=white,font=\small},
        bigloop/.style={shift={(0,0.01)},text width=1.6cm,align=center},
        bigloopd/.style={shift={(0,-0.01)},text width=1.6cm,align=center}]
        \node[state,initial]    (1) {$s$};
        \node[state]            (2) [right of=1] {$4$};
        \node[state]            (3) [right of=2] {$3$};
        \node[state]            (5) [right of=3] {$1$};
        \node[state]            (4) [below= .5cm of 5] {$2$};
        \node[state,accepting]  (6) [right of=4] {$\1$};
        \node[state,accepting]  (7) [right of=5] {$\0$};
        \path
          (1) edge node{$x$} (2)
          (2) edge node{$a_{4},b_4$} (3)
          (3) edge node{$b_3$} (4)
              edge node{$a_3$} (5)
          (6) edge[bend left=34] node{$y$} (1)
          (5) edge node{$a_1,b_1$} (7)
          (4) edge node{$a_2,b_2$} (6) ;
      \end{tikzpicture}
      \qquad
      \begin{tikzpicture}[baseline,->,>=stealth,shorten >=1pt,node distance=2cm,
        state/.style={circle,minimum size=1mm,very thin,draw=black,initial text=},
        every node/.style={fill=white,font=\small},
        bigloop/.style={shift={(0,0.01)},text width=1.6cm,align=center},
        bigloopd/.style={shift={(0,-0.01)},text width=1.6cm,align=center}]
        \node[state,initial,accepting]    (1) {$q$};
        \node[state]                      (2) [right of=1] {$t$};
        \node[state]                      (3) [right of=2] {$3$};
        \path
          (1) edge[bend left] node{$x$} (2)
          (2) edge[loop above] node[bigloop]{$a_2,b_2$\\$a_4,b_4$} (2)
          (2) edge[bend left] node{$a_3$} (3)
          (3) edge[bend left] node{$b_3$} (2)
          (2) edge[bend left] node{$y$} (1) ;
      \end{tikzpicture}
      \caption{Automata $A'$ and $B$ for the circuit $g_1=\0$, $g_2=\1$, $g_3=g_1\wedge g_2$, $g_4= g_3 \vee g_3$.}
      \label{figPcomp}
    \end{figure}
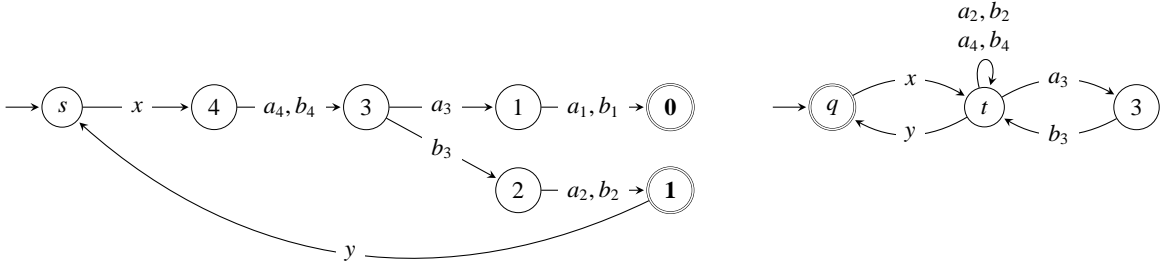

    By construction, there exists an infinite tower between the languages $L(A)$ and $L(B)$ if and only if there exists an infinite tower between $L(A')$ and $L(B)$. It is therefore sufficient to prove that the circuit evaluates gate $g_n$ to $\1$ if and only if there is an infinite tower between the languages $L(A')$ and $L(B)$. 

    (Only if) Assume that $g_n$ is evaluated to $\1$. Let $\{x,y\}\subseteq\Sigma\subseteq \{a_i,b_i\mid i=1,2,\dots,n\}\cup\{x,y\}$ be the alphabet defined as follows. The letter $a_i$ is an element of $\Sigma$ if and only if $g_i$ is evaluated to $\1$, $i$ is accessible from $n$, and either $\ell(i)=\1$ or $g_{\ell(i)}$ is evaluated to $\1$. Similarly, we have $b_i\in \Sigma$ if and only if $i$ is accessible from $n$ and $g_i$ is evaluated to $\1$, and either $r(i)=\1$ or $g_{r(i)}$ is evaluated to $\1$.  It is not hard to observe that each transition labeled by a letter $a_i$ or $b_i$ from $\Sigma$ is part of a path in $A'$ from $n$ to $\1$. Moreover, the definition of $\wedge$ implies that $a_i\in \Sigma$ if and only if $b_i\in \Sigma$ for each $i=1,2,\dots, n$ such that $f(i)=\wedge$. Notice that $B$ has a cycle from $q$ to $q$ labeled by $xa_ib_iy$ for each $i=1,2,\dots,n$ such that $f(i)=\wedge$, and also a cycle from $q$ to $q$ labeled by $xc_iy$ for each $c\in\{a,b\}$ and each $i=1,2,\dots,n$ such that $f(i)=\vee$ or $f(i)=\1$. Therefore, both automata $A'$ and $B$ have a cycle over the alphabet $\Sigma$ containing the initial and accepting states. The existence of an infinite tower follows.
    
    (If) Assume that there exists an infinite tower $(w_i)_{i=1}^{\infty}$ between $A'$ and $B$, and, for the sake of contradiction, assume that $g_n$ is evaluated to $\0$. Note that any path from $i$ to $\1$, where $g_i$ is evaluated to $\0$, must contain a state corresponding to an $\wedge$-gate that is evaluated to $\0$. In particular, this applies to any path in $A'$ accepting some $w_i$ of length at least $n+2$, since such a path contains a subpath from $n$ to $\1$. Thus, let $j$ be the smallest positive integer such that $f(j)=\wedge$, gate $g_j$ is evaluated to $\0$, and $a_j$ or $b_j$ is in $\cup_{i=1}^\infty\alp(w_i)$. The construction of $B$ implies that both $a_j$ and $b_j$ are in $\cup_{i=1}^\infty\alp(w_i)$. Since $g_j$ is evaluated to $\0$, there exists $c\in\{a,b\}$ such that the transition from $j$ under $c_j$ leads to  a state $\sigma$, where either $\sigma=\0$ or $\sigma<j$ and $g_{\sigma}$ is evaluated to $\0$. Consider a string $w_i\in L(A')$ containing $c_j$. If $w_i$ is accepted in $\1$, then the accepting path contains a subpath from $\sigma$ to $\1$, which yields a contradiction with the minimality of $j$. Therefore, $w_i$ is accepted in $\0$. However, no letter of a transition to state $\0$ appears in a string accepted by $B$, a contradiction again.
  \end{proof}

\section{Computing a piecewise testable separator}\label{secAlg}
  We now recall the simple method~\cite{pc2013} to decide the separation problem and to compute a separating piecewise testable language, and show that its running time corresponds to the height of the highest finite tower. 

  Let $L_0$ and $R_0$ be two disjoint regular languages represented by NFAs. The method first constructs a decreasing sequence of languages $\cdots\preccurlyeq R_2 \preccurlyeq L_2 \preccurlyeq R_1 \preccurlyeq L_1 \preccurlyeq R_0$ such that there exists a piecewise testable separator if and only if from some point on all the languages are empty. Then, the nonempty languages of the sequence are used to construct a piecewise testable separator.
  
  For $k\ge 1$, let $L_k=\{ w \in L_{k-1} \mid w \preq R_{k-1}\}$ be the set of all strings of $L_{k-1}$ that can be embedded into $R_{k-1}$, and let $R_k=\{ w\in R_{k-1} \mid w\preq L_k\}$, see Figure~\ref{fig2}.
  \begin{figure}
    \centering
    \begin{tikzpicture}
      \draw (1,2) node {$L_0$};
      \draw (5,2) node {$R_0$};
      \draw (1,1.5) node {$v_1\in L_1$};
      \draw (5,1) node {$R_1$};
      \draw (1,0.5) node {$v_2\in L_2$};
      \draw (5,0) node {$R_2$};
      \draw (3,0) node {$\vdots$};
      \draw[->] (1.6,1.6) -- (4.8,2);
      \draw[->] (4.8,1.1) -- (1.6,1.5);
      \draw[->] (1.6,.6) -- (4.8,1);
      \draw[->] (4.8,.05) -- (1.6,.5);
    \end{tikzpicture}
    \caption{The sequence of languages; an arrow stands for the embedding relation $\preccurlyeq$.}
    \label{fig2}
  \end{figure}
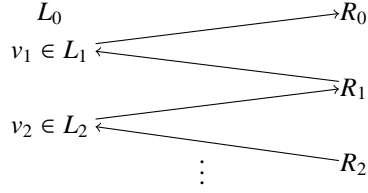
  Let $K$ be a language accepted by an NFA $A=(Q,\Sigma,\delta,Q_0,F)$. Let $down(K)$ denote the language of all subsequences of the language $K$. It is accepted by the NFA $A_{down}=(Q,\Sigma,\delta',Q_0,F)$, where $\delta'(q,a)=\delta(q,a)$ and $\delta'(q,\eps)=\bigcup_{a\in\Sigma} \delta(q,a)$. Then $L_k = L_{k-1}\cap down(R_{k-1}) = L_{0}\cap down(R_{k-1})$, and analogously for $R_k$, thus the constructed languages are regular.
  
  We now show that there exists a constant $B\ge 1$ such that $L_{B}=L_{B+1}=\ldots$, which also implies $R_B=R_{B+1}=\ldots$. Assume that no such constant exists. Then there are infinitely many strings $v_{\ell}\in L_{\ell}\setminus L_{\ell+1}$, for all $\ell\ge 1$, as depicted in Figure~\ref{fig2}. By Higman's lemma~\cite{higman}, there exist $i < j$ such that $v_i\preccurlyeq v_j$, hence $v_i \preq R_{j-1}$, which is a contradiction because $v_i \not\preq R_{i}$ and $R_{j-1}\subseteq R_i$. 

  By construction, languages $L_B$ and $R_B$ are mutually embeddable into each other, $L_B\preq R_B \preq L_B$, which describes a way how to construct an infinite tower. Thus, if there is no infinite tower, languages $L_B$ and $R_B$ must be empty. 

  The constant $B$ depends on the height of the highest finite tower. Let $(w_i)_{i=1}^{r}$ be a maximal finite tower between $L_0$ and $R_0$ and assume that $w_r$ belongs to $L_0$. In the first step, the method eliminates all strings that cannot be embedded into $R_0$, hence $w_r$ does not belong to $L_1$, but $(w_i)_{i=1}^{r-1}$ is a tower between $L_1$ and $R_0$. Thus, in each step of the algorithm, all maximal strings of all finite towers (belonging to the language under consideration) are eliminated, while the rests of towers still form towers between the resulting languages. Therefore, as long as there is a maximal finite tower, the algorithm can make another step.
  
  Assume that there is no infinite tower (that is, $L_B=R_B=\emptyset$). We use the languages computed above to construct a piecewise testable separator. For a string $w=a_1a_2\cdots a_\ell$, we define $L_w = \Sigma^* a_1 \Sigma^*a_2\Sigma^* \cdots \Sigma^* a_\ell \Sigma^*$, which is piecewise testable by definition. Let $up(L) = \bigcup_{w\in L} L_w$ denote the language of all supersequences. The language $up(L)$ is regular and its NFA is constructed from an NFA for $L$ by adding self-loops under all letters to all states, cf.~\cite{pkps2014} for more details. By Higman's Lemma~\cite{higman}, $up(L)$ can be written as a finite union of languages of the form $L_w$, for some $w\in L$, hence it is piecewise testable. For $k=1,2,\ldots,B$, we define the piecewise testable languages
  \[
    S_k = up(R_0\setminus R_k) \setminus up(L_0\setminus L_k)
 \]
  and show that $S = \bigcup_{k=1}^{B} S_k$ is a piecewise testable separator of $L_0$ and $R_0$. 
  
  More specifically, we show that $L_0\cap S_k=\emptyset$ and that $R_0\subseteq S$. To prove the former, let $w\in L_0$. If $w\in L_0 \setminus L_k$, then $w\in up(L_0 \setminus L_k)$, hence $w\notin S_k$. If $w\in L_k$ and $w\in up(R_0\setminus R_k)$, then there exists $v\in R_0\setminus R_k$ such that $v\preccurlyeq w$. However, $R_k=\{ u\in R_0 \mid u\preq L_k\}$, hence $v\in R_k$, which is a contradiction proving that $L_0\cap S_k=\emptyset$. To prove the later, we show that $R_{k-1}\setminus R_k\subseteq S_k$. Then also $R_0= \bigcup_{k=1}^{B} (R_{k-1}\setminus R_k) \subseteq S$. To show this, notice that $R_{k-1}\setminus R_k\subseteq R_0\setminus R_k \subseteq up(R_0\setminus R_k)$. If $w\in R_{k-1}$ and $w\in up(L_0\setminus L_k)$, then there exists $v\in L_0\setminus L_k$ such that $v\preccurlyeq w$. However, $L_k=\{u \in L_0 \mid u\preq R_{k-1}\}$, hence $v\in L_k$, a contradiction. Thus, we have shown that $L_0\cap S=\emptyset$ and $R_0\subseteq S$. Moreover, $S$ is piecewise testable because it is a finite boolean combination of piecewise testable languages. 

  Recall that we have shown above that the complexity of this method is closely related to the height of the highest finite tower. Therefore, in the rest of this paper, we focus on this problem. However, we would like to point out several directions of the future investigation here. 
  
  Consider the sequence of languages $R_0,L_1,R_1,\ldots,L_B,R_B$ constructed by the method. The size of the representation of the separator $S$ depends not only on the number of languages in the sequence, but also on their representation. Let $Z_k$ denote the $k$th language in the sequence. If the languages are represented by NFAs, the method gives the bound $O(n^k)$ on the size of the NFA for the language $Z_k$, where $n$ is the number of states of the NFAs for $L_0$ and $R_0$. Another option would be to use alternating finite automata (AFAs) for obvious reasons in their complexities for boolean operations. However, the problem now comes from the operation $down(Z_k)$. Even if we were able to obtain a bound $\Omega(c^k\cdot n)$ on the size of the AFA representation of the language $Z_k$, for some constant $c$, the only interesting case would be the case of $c=1$; otherwise, the problem would still be exponential. It turns out that it is not the case. It has recently been shown in~\cite{pkps2014} that the downward closure operation $down()$ for AFAs can actually require an exponential blow-up of states.
  
  Finally, even though our study in the rest of the paper shows that these representations can be doubly exponential with respect to the size of the input NFAs, practical experiments show that at some point the size of the automata does not grow anymore, but actually decreases. Indeed, the languages $L_B=R_B$ are empty, hence a representation of size $\Omega(n^{2B})$ is not needed for them.

\section{The upper bound on the height of towers}
  It was shown in~\cite{icalp2013} that there exists either an infinite tower or a constant bound on the height of any tower. We now establish an upper bound on that constant in terms of the number of states, $n$, and the cardinality of the alphabet, $m$. The bound given by Stern for minimal DFAs is $2^{m^2 n}$. Our new bound is $O(n^m)=O(2^{m\log n})$ and holds for NFAs. Thus, our bound is polynomial with respect to the number of states if the alphabet is fixed, but exponential if the alphabet grows. For $m < n\,/\log n$, the bound is less than $2^n$, but we show later that if $m = O(n)$, the height of towers may grow exponentially with respect to the number of states.

  Before we state the theorem, we recall that the depth of an automaton is the number of states on the longest simple path, thus it is bounded by the number of states of the automaton.
  
  \begin{thm}\label{thm03}
    Let $A_0$ and $A_1$ be NFAs over an alphabet $\Sigma$ of cardinality $m$ with depth at most $n$. Assume that there is no infinite tower between the languages $L(A_0)$ and $L(A_1)$. Let $(w_i)_{i=1}^r$ be a tower between $L(A_0)$ and $L(A_1)$ such that $w_i\in L(A_{i\bmod 2})$. Then $r \le \frac{n^{m+1}-1}{n-1}$.
  \end{thm}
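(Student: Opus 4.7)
The plan is to proceed by induction on $m=|\Sigma|$. The base case $m=0$ is immediate: only $\eps$ is available, and its presence in $L(A_0)\cap L(A_1)$ would trigger the infinite tower $\eps,\eps,\ldots$; hence $r\le 1=(n^1-1)/(n-1)$.

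For the inductive step I would rely on the fact that $w_i\preccurlyeq w_{i+1}$ forces $\alp(w_i)\subseteq\alp(w_{i+1})$, so the sequence of alphabets is a weakly increasing chain in $2^\Sigma$. If some letter $a\in\Sigma$ never appears in any $w_i$, then the whole tower is a tower between the NFAs obtained from $A_0,A_1$ by deleting $a$-transitions; their depth is still at most $n$ and their alphabet has size $m-1$, so the inductive hypothesis yields $r\le(n^m-1)/(n-1)$. Otherwise let $j$ be the smallest index with $\alp(w_j)=\Sigma$; the prefix $(w_1,\ldots,w_{j-1})$ falls into the previous case and so $j-1\le(n^m-1)/(n-1)$. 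Since $(n^m-1)/(n-1)+n^m=(n^{m+1}-1)/(n-1)$, the proof reduces to bounding the full-alphabet segment $(w_j,\ldots,w_r)$ by $n^m$.

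To control this segment I would attach to each $w_i$ in it a signature drawn from a pool of cardinality $n^m$: fix an accepting path $\pi_i$ of $w_i$ in $A_{i\bmod 2}$ and, for each $a\in\Sigma$, record the state $p_a^{(i)}$ visited by $\pi_i$ just before a canonical occurrence of $a$---for example, the last one. Pinning each coordinate to a fixed set of at most $n$ ``skeleton'' states via the depth hypothesis gives a signature space of size at most $n^m$. A pigeonhole collision $\tau_i=\tau_{i'}$ with $i<i'$ would then be exploited in two steps: first, between the $a$-occurrences corresponding to $p_a^{(i)}$ one extracts, for every $a\in\Sigma$, a subpath labeled by a word spanning $\Sigma$, i.e.\ a cycle over $\Sigma$ in $A_{i\bmod 2}$; second, the intermediate opposite-parity strings, together with the subsequence relations $w_i\preccurlyeq w_{i+1}\preccurlyeq w_{i'}$, let one transfer a parallel cycle over $\Sigma$ to $A_{1-(i\bmod 2)}$. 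Two cycles over $\Sigma$ in both automata then pump up an infinite tower, contradicting the hypothesis.

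The most delicate step is the transfer of the cycle between the two automata. A collision within one parity naturally yields a cycle only in that parity's NFA, and obtaining a matching cycle in the other NFA requires a careful combination of the subsequence chain with the alternation pattern of the tower; a looser version of the argument would give only $2n^m$ instead of $n^m$, and reaching the sharp constant is what I expect to require the most care. A secondary technicality is that ``depth'' does not literally bound the number of distinct states on every accepting path (non-simple accepting paths can visit more states than the longest simple path), so the choice of $\pi_i$ and of the canonical occurrence of each $a$ must be made with respect to a simple-path witness in order for the signature to really lie in an $n^m$-set.
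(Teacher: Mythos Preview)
Your induction on $m$ and the split into a not-yet-full-alphabet prefix (bounded by $(n^m-1)/(n-1)$) and a full-alphabet tail is sound, and the arithmetic $(n^m-1)/(n-1)+n^m=(n^{m+1}-1)/(n-1)$ is exactly what one needs. The genuine gap is the $n^m$ bound on the tail: a signature collision of the kind you describe does not produce a cycle over $\Sigma$, and this failure occurs \emph{before} the transfer step you flag as delicate. A collision $p_a^{(i)}=p_a^{(i')}$ for every $a$ says that two \emph{different} accepting paths $\pi_i,\pi_{i'}$ each pass through the same state before their own canonical occurrence of $a$. Those $m$ coincidences sit at unrelated positions of two distinct words; nothing forces any single state to be revisited within one path after a segment whose label spans $\Sigma$, and there is no mechanism to splice the $m$ matched states into a closed walk. (With ``last occurrence'', ordering the letters by the position of their last occurrence in $\pi_{i'}$ yields a path from $p_{a_1}$ through $p_{a_2},\dots,p_{a_m}$ to an accepting state that reads every letter---but never returns to $p_{a_1}$.) The depth issue you call secondary is also real---depth $\le n$ does not bound the number of states, so the signature space is not $n^m$ without a further idea---but even granting that, the cycle-extraction step does not go through.

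The paper avoids both obstacles by a different device. It gives each $w_i$ a \emph{cyclic factorization} along its accepting path into at most $n-1$ letter factors and at most $n$ cycle factors, where each cycle factor $v$ is read along a subpath containing a cycle over $\alp(v)$ and, if the factorization is nontrivial, $\alp(v)\subsetneq\alp(w_i)$. A weight is assigned via $g(x)=n\frac{n^x-1}{n-1}$ (so $g(x{+}1)=n\,g(x)+n$): cycle factors contribute $g(|\alp(v)|)$, letter factors contribute $1$. Refining the factorization of $w_i$ into that of $w_{i-1}$ strictly decreases the weight unless the two factorizations have identical shape and identical alphabets on every cycle factor; that exceptional case exhibits matching cycles in \emph{both} automata at once, giving an infinite tower. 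Since the top weight is at most $g(m)$ and the bottom at least $0$, $r\le g(m)+1=(n^{m+1}-1)/(n-1)$. The key contrast with your plan is that the paper never compares states across two tower elements; it compares factorization shapes, and a repeated shape already carries the cycles in both automata simultaneously---precisely the information your signatures do not encode.
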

  \begin{proof}
    First, we define some new concepts. We say that $w=v_1v_2\cdots v_k$ is a \emph{cyclic factorization} of $w$ with respect to a pair of states $(q,q')$ in an automaton $A$, if there is a sequence of states $q_0,\dots,q_{k-1},q_{k}$ such that $q_0=q$, $q_{k}=q'$, and $  q_{i-1} \stackrel {v_i} \longrightarrow q_{i}\,$, for each $i=1,2,\dots k$, and either $v_i$ is a letter, or the path $q_{i-1} \stackrel {v_i} \longrightarrow q_{i}$ contains a cycle over $\alp(v_i)$. We call $v_i$ a \emph{letter factor} if it is a letter and $q_{i-1}\neq q_i$, and a \emph{cycle factor} otherwise. The factorization is \emph{trivial} if $k=1$. Note that this factorization is closely related to the one given in~\cite{Almeida-jpaa90}, see also~\cite[Theorem~8.1.11]{AlmeidaBook}.

    We first show that if $q'\in \delta(q, w)$ in some automaton $A$ with depth $n$, then $w$ has a cyclic factorization $v_1v_2\cdots v_k$ with respect to $(q,q')$ that contains at most $n$ cycle factors and at most $n-1$ letter factors. Moreover, if $w$ does not admit the trivial factorization with respect to $(q,q')$, then $\alp(v_i)$ is a strict subset of $\alp(w)$ for each cycle factor $v_i$, $i=1,2,\dots,k$.

    Consider a path $\pi$ of the automaton $A$
    from $q$ to $q'$ labeled by a string $w$.
    Let $q_0=q$ and define the factorization  $w=v_1 v_2 \cdots v_k$ inductively
    by the following greedy strategy. 
    Assume that we have defined the factors $v_1,  v_2\ldots, v_{i-1}$
    such that $w = v_1 \cdots v_{i-1} w'$
    and $q_0 \xrightarrow{ v_1 v_2\cdots v_{i-1}} q_{i-1}$.
    The factor $v_i$ is defined as the label
    of the longest possible initial segment $\pi_i$ 
    of the path $q_{i-1}\xrightarrow{w'} q'$ 
    such that either $\pi_i$ contains a cycle over $\alp(v_i)$
    or
    $\pi_i=q_{i-1},a,q_{i}$, where $v_i=a$, thus $v_i$ is a letter.
    Such a factorization is well defined,
    and it is a cyclic factorization of $w$.

    Let $p_i$, for $i=1,2,\dots,k$, be a state
    such that the path $q_{i-1} \stackrel {v_i} \longrightarrow q_{i}$
    contains a cycle $p_i\rightarrow p_i$ over the alphabet $\alp(v_i)$ if $v_i$ is a cycle factor,
    and $p_i=q_{i-1}$ if $v_i$ is a letter factor.
    If $p_i=p_j$ with $i<j$ such that $v_i$ and $v_j$ are cycle factors,
    then we have a contradiction with the maximality of $v_i$ since
    $q_{i-1} \xrightarrow{v_i v_{i+1}\cdots v_j}  q_{j}$
    contains a cycle $p_i\rightarrow p_i$ from $p_i$ to $p_i$
    over the alphabet $\alp(v_i v_{i+1}\cdots v_j)$.
    Therefore, the factorization contains at most $n$ cycle factors.

    Note that $v_i$ is a letter factor only if the state $p_i$,
    which is equal to $q_{i-1}$ in such a case, 
    has no reappearance in the path $q_{i-1}\xrightarrow{v_i \cdots v_k} q'$.
    This implies that there are at most $n-1$ letter factors. 
    Finally, if $\alp(v_i)=\alp(w)$, then $v_i=v_1=w$
    follows from the maximality of $v_1$. 

    We now define inductively cyclic factorizations of $w_i$, such that the factorization of $w_{i-1}$ is a refinement of the factorization of $w_{i}$. Let $w_r=v_{r,1}v_{r,2}\cdots v_{r,k_r}$ be a cyclic factorization of $w_r$ defined, as described above, by some accepting path in the automaton $A_{r\bmod 2}$.
    Factorizations $w_{i-1}=v_{i-1,1}v_{i-1,2}\cdots v_{i-1,k_{i-1}}$ are defined as follows. Let 
    $
      w_{i-1}=v'_{i,1}v'_{i,2}\cdots v'_{i,k_{i}},
    $
    where $v'_{i,j}\preq v_{i,j}$, for each $j=1,2,\dots,k_{i}$. Note that such a factorization exists, since we have that $w_{i-1}\preq w_{i}$. Then $v_{i-1,1}v_{i-1,2}\cdots v_{i-1,k_{i-1}}$ is defined as a concatenation of cyclic factorizations of $v'_{i,j}$, for $j=1,2,\dots,k_{i}$, corresponding to an accepting path of $w_{i-1}$ in  $A_{i-1\bmod 2}$. The cyclic factorization of the empty string is defined as empty. Note also that a letter factor of $w_{i}$ either disappears in $w_{i-1}$, or it is  ``factored'' into a letter factor.

    In order to measure the height of a tower, we introduce
    a weight function $f$ of factors in a factorization $v_1v_2\cdots v_k$.
    First, let
    $
      g(x)= n\frac{{n^x}-1}{n-1}.
    $
    Note that $g$ satisfies $g(x+1)=n\cdot g(x)+(n-1)+1$.
    Now, let 
    $f(v_i)=1$ if $v_i$ is a letter factor, and let $f(v_i)=g(|\alp(v_i)|)$ if $v_{i}$ is a cycle factor.
    Note that, by definition, $f(\eps)=0$.
    The weight of the factorization $v_1v_2\cdots v_k$ is then defined by
    $
      W(v_1v_2\cdots v_k)=\sum_{i=1}^k f(v_i).
    $
    Let 
    $
      W_i=W(v_{i,1}v_{i,2}\cdots v_{i,k_i}).
    $
    We claim that $W_{i-1}<W_{i}$ for each $i=2,3,\dots,r$. Let $v_1v_2\cdots v_k$ be the fragment of the cyclic factorization of $w_{i-1}$ that emerged as the cyclic factorization of $v_{i,j}'\preq v_{i,j}$. If the factorization is not trivial, then, by the above analysis, 
    $
      W(v_1v_2\cdots v_k) \leq n-1 + n\cdot g(|\alp(v_{i,j})|-1)
       < g(|\alp(v_{i,j})|) = f(v_{i,j}). 
    $
    Similarly, we have $f(v_{i,j}')<f(v_{i,j})$ if $|\alp(v_{i,j}')|<|\alp(v_{i,j})|$. Altogether, we have $W_{i-1}<W_{i}$ as claimed, unless 
			$k_{i-1}=k_{i}$,
			the factor $v_{i-1,j}$ is a letter factor if and only if $v_{i,j}$ is a letter factor, and
			$\alp(v_{i-1,j})= \alp(v_{i,j})$ for all $j=1,2,\dots,k_i$.
		Assume that such a situation takes place. We show that it leads to an infinite tower. Let $L$ be the language of strings $z_1z_2\cdots z_{k_i}$ such that $z_j=v_{i,j}$ if $v_{i,j}$ is a letter factor, and $z_j\in (\alp(v_{i,j}))^*$ if $v_{i,j}$ is a cycle factor. Since $w_i\in L(A_{i\bmod 2})$ and $w_{i-1}\in L(A_{i-1\bmod 2})$, the definition of a cycle factor implies that, for each $z\in L$, there exist $z'\in L(A_0)\cap L$ such that $z\preq z'$ and $z''\in L(A_1)\cap L$ such that $z\preq z''$. The existence of an infinite tower follows. We have therefore proved that $W_{i-1}<W_{i}$. 

    The proof is completed, since $W_r\leq f(w_r)\leq g(m)$, $W_1\geq 0$, and the bound in the claim is equal to $g(m)+1$.
  \end{proof}

\section{The lower bound on the height of towers}
  It was shown in~\cite{mfcs2014} that the upper bound is tight for binary regular languages up to a linear factor. Namely, it was shown that for every odd positive integer $n$, there exist two binary NFAs with $n-1$ and $n$ states having a tower of height $n^2-4n+5$ and no infinite tower. The following theorem improves this result.
  \begin{thm}\label{thm02}
  \label{thm:quadratic}
    For every even positive integer $n$, there exists a binary NFA with $n$ states and a binary DFA with $n$ states having a tower of height $n^2-n+1$ and no infinite tower.
  \end{thm}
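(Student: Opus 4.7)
The plan is to exhibit an explicit construction. Let $\Sigma=\{a,b\}$ and set $W=(a^{n-1}b)^{n-1}$, a word of length $n(n-1)=n^2-n$. Its prefixes $w_0 \prec w_1 \prec \cdots \prec w_{n^2-n}$ form a chain of $n^2-n+1$ strings under the subsequence order $\preccurlyeq$, so it suffices to distribute these prefixes between $L(A_0)$ and $L(A_1)$ in alternation while keeping each automaton to $n$ states and preventing any infinite tower.

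For the DFA $A_1$ I would take the obvious $n$-state cyclic automaton with states $q_0,\ldots,q_{n-1}$, transitions $q_i \xrightarrow{a} q_{i+1}$ for $i<n-1$ and $q_{n-1}\xrightarrow{b} q_0$, no other transitions, and accepting set $\{q_i : i\text{ odd}\}$. After reading $w_i$ the automaton sits in $q_{i\bmod n}$, and since $n$ is even the parity of $i\bmod n$ equals the parity of $i$; hence $w_i\in L(A_1)$ exactly when $i$ is odd. For the NFA $A_0$ I would design a companion $n$-state automaton accepting every even-indexed $w_i$ but structured so that an accepting run cannot complete arbitrarily many full $(a^{n-1}b)$-rounds. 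The natural candidate is a similar cycle with the even-indexed states accepting, modified by replacing some of the deterministic back-transitions with nondeterministic branches so that the subword $(a^{n-1}b)^{n-1}$ is accepted via a fixed set of ``threads'' whose reuse is limited; this is where the freedom of an NFA over a DFA is essential.

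Having constructed the automata, the sequence $w_0, w_1, \ldots, w_{n^2-n}$ alternates between $L(A_0)$ and $L(A_1)$ and forms a tower of height $n^2-n+1$ as required. To rule out any infinite tower, I would invoke the characterization recalled in Section~\ref{RelRes}: absence of an infinite tower is equivalent to separability by a piecewise testable language. I would then exhibit such a separator directly from the cycle structure, as a Boolean combination of patterns of the form $\Sigma^*a_{i_1}\Sigma^*\cdots\Sigma^* a_{i_k}\Sigma^*$ with $k$ bounded by a function of $n$, so that infiniteness of a tower would force both languages to share an arbitrarily long chain in a PT-forbidden configuration.

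The main obstacle is the tension in the design of $A_0$: a purely cyclic NFA admits an infinite tower against $A_1$ (namely $\varepsilon, a, aa, aaa, aaab, \ldots$), while any modification enforcing a bound on the number of cycle returns risks either exceeding the $n$-state budget or failing to accept some even-indexed prefix of $W$. Balancing these competing constraints---exactly $n$ states, all even-indexed $w_i$ accepted, and no infinite tower---is the delicate part of the proof, and it is the exploitation of nondeterminism to share states between the ``counting'' and ``cycling'' duties of $A_0$ that yields the improved additive term ($-n+1$ rather than the earlier $-4n+5$ of \cite{mfcs2014}).
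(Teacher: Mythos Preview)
Your proposal has a genuine gap: the NFA $A_0$ is never actually constructed. You correctly identify this as ``the main obstacle'' and ``the delicate part,'' but a proof sketch that names the crux without resolving it is not a proof. The description you give---``a similar cycle with the even-indexed states accepting, modified by replacing some of the deterministic back-transitions with nondeterministic branches''---does not pin down an automaton, and without $A_0$ in hand neither the claim that every even-indexed prefix of $W$ is accepted nor the absence of an infinite tower can be checked. Your plan for the latter (exhibit a piecewise testable separator) is likewise only a plan, and in any case cannot be carried out until $A_0$ is fixed.

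The paper resolves the obstacle by abandoning the cyclic shape for $A_0$ entirely. Its $A_0$ is an $a$-\emph{path} $1 \xrightarrow{a} 2 \xrightarrow{a} \cdots \xrightarrow{a} n{-}1$, with $b$-self-loops on states $1,\dots,n{-}2$ and a two-state $b$-cycle $n{-}1 \leftrightarrow n$; all of $1,\dots,n{-}1$ are initial and only $n{-}1$ is accepting. The nondeterminism lies solely in the choice of initial state, not in branching transitions. The DFA $A_1$ is, up to swapping the roles of $a$ and $b$, essentially your cyclic automaton. The master word is $(b^{n-1}a)^{n-2}b^{n}$: its prefixes ending in an even-length $b$-tail (including those ending in $a$) lie in $L(A_0)$, those ending in an odd-length $b$-tail lie in $L(A_1)$, and this gives the tower of height $n^2-n+1$. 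The no-infinite-tower argument is then a one-line count rather than an appeal to separability: every word in $L(A_0)$ contains at most $n-2$ occurrences of $a$ (the $a$-path has only $n-2$ edges), whereas any word of $L(A_1)$ of length exceeding $n(n-1)$ already contains at least $n-1$ occurrences of $a$, so no further embedding into $L(A_0)$ is possible. This structural choice---an acyclic $a$-spine with $b$-loops, rather than a modified cycle---is precisely the missing idea in your sketch.
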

  \begin{proof}
    Let $n$ be an even number and define the automata $A_0$ and $A_1$ with $n$ states as depicted in Figure~\ref{ex01_1}.
    The NFA $A_0=(\{1,2,\ldots,n\},\{a,b\},\delta_0,\{1,2,\ldots,n-1\},\{n-1\})$ consists 
    of an $a$-path through states $1,2,\ldots,n-1$, 
    of self-loops under $b$ in all states $1, 2, \ldots, n-2$,
    and of a $b$-cycle from $n-1$ to $n$ and back to $n-1$.
    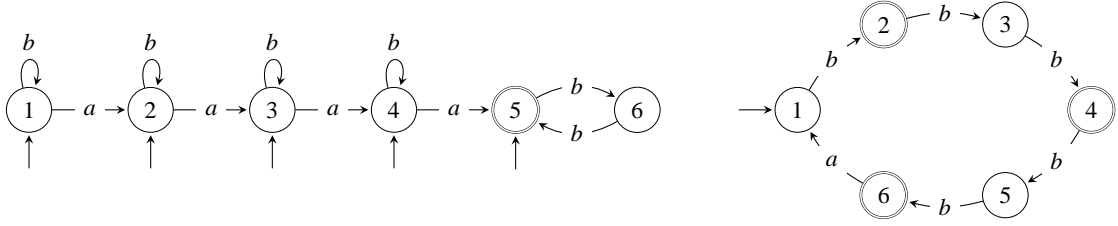
\begin{figure}
      \centering
      \begin{tikzpicture}[baseline,->,>=stealth,shorten >=1pt,node distance=1.6cm,
        state/.style={circle,minimum size=1mm,very thin,draw=black,initial text=},
        every node/.style={fill=white,font=\small},
        bigloop/.style={shift={(0,0.01)},text width=1.6cm,align=center},
        bigloopd/.style={shift={(0,-0.01)},text width=1.6cm,align=center}]
        \node[state,initial below]            (1) {$1$};
        \node[state,initial below]            (2) [right of=1] {$2$};
        \node[state,initial below]            (3) [right of=2] {$3$};
        \node[state,initial below]            (4) [right of=3] {$4$};
        \node[state,accepting,initial below]  (5) [right of=4] {$5$};
        \node[state]                          (6) [right of=5] {$6$};
        \path
          (1) edge node {$a$} (2)
          (2) edge node {$a$} (3)
          (3) edge node {$a$} (4)
          (4) edge node {$a$} (5)
          (1) edge[loop above] node[bigloop] {$b$} (1)
          (2) edge[loop above] node[bigloop] {$b$} (2)
          (3) edge[loop above] node[bigloop] {$b$} (3)
          (4) edge[loop above] node[bigloop] {$b$} (4)
          (5) edge[bend left] node {$b$} (6)
          (6) edge[bend left] node {$b$} (5);
      \end{tikzpicture}
      \quad\quad
      \begin{tikzpicture}[baseline,->,>=stealth,shorten >=1pt,node distance=1.6cm,
        state/.style={circle,minimum size=1mm,very thin,draw=black,initial text=},
        every node/.style={fill=white,font=\small},
        bigloop/.style={shift={(0,0.01)},text width=1.6cm,align=center}]
        \node[state,initial]    (1) {$1$};
        \node[state,accepting]  (2) [above right of=1]{$2$};
        \node[state]            (3) [right of=2] {$3$};
        \node[state, accepting] (4) [below right of=3] {$4$};
        \node[state]            (5) [below left of=4] {$5$};
        \node[state, accepting] (6) [left of=5] {$6$};
        
        \path
          (1) edge[bend left=15] node {$b$} (2)
          (2) edge[bend left=15] node {$b$} (3)
          (3) edge[bend left=15] node {$b$} (4)
          (4) edge[bend left=15] node {$b$} (5)
          (5) edge[bend left=15] node {$b$} (6)
          (6) edge[bend left=15] node {$a$} (1);
       \end{tikzpicture}
      \caption{Automata $A_0$ (left) and $A_1$ (right); $n=6$.}
      \label{ex01_1}
      \label{ex01_2}
    \end{figure}

    The DFA $A_1=(\{1,2,\ldots,n\},\{a,b\},\delta_1,1,\{2,4,\dots,n\})$ consists
    of a $b$-path through states $1,2,\ldots,n$ 
    and of an $a$-transition from state $n$ to state $1$. All even states are accepting.
    
    Consider the string $w=(b^{n-1}a)^{n-2}b^{n}$. Note that $A_0$ accepts all prefixes of $w$ ending with an even number of $b$s, including those ending with $a$. In particular, $w\in L(A_0)$. On the other hand, the automaton $A_1$ accepts all prefixes of $w$ ending with an odd number of $b$s. Hence the maximum height of a tower between $L(A_0)$ and $L(A_1)$ is at least $n(n-2) + n + 1 = n^2 - n + 1$.
    
    To see that there is no infinite tower between the languages, notice that any string in $L(A_0)$ contains at most $n-2$ occurrences of letter $a$. As the languages are disjoint (they require a different parity of the $b$-tail), any infinite tower has to contain a string from $L(A_1)$ of length more than $n \cdot (n-1)$. But any such string of $L(A_1)$ contains at least $n-1$ occurrences of letter $a$, hence it cannot be embedded into any string of $L(A_0)$. This means that there cannot be an infinite tower.
  \end{proof}

  Assuming that the alphabet is fixed, it was shown in~\cite{mfcs2014} that for every $n\ge 1$, there exist two NFAs with at most $n$ states over a four-letter alphabet having a tower of height $\Omega(n^3)$ and no infinite tower. However, since our original motivation comes from the XML databases, we are more interested in the case where the alphabet grows with the size of the automata.

  \begin{thm}\label{thm05}
  \label{thm:exp}
    For every $m\ge 0$, there exist an NFA with $m+1$ states and a DFA with two states over an alphabet of cardinality $m+1$ having a tower of height $2^{m+1}$ and no infinite tower.
  \end{thm}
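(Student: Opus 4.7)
The plan is to construct explicit automata $A_0, A_1$ and an explicit tower of height $2^{m+1}$. Put $\Sigma = \{a_0, a_1, \dots, a_m\}$ and recursively define $W_0 = a_0$ and $W_{k+1} = W_k\, a_{k+1}\, W_k$; then $|W_m| = 2^{m+1} - 1$. The $2^{m+1}$ prefixes of $W_m$ (including $\eps$), listed in order of increasing length, will form the tower $(w_i)_{i=1}^{2^{m+1}}$, and since each $w_i$ is a prefix of $w_{i+1}$ the embedding $w_i \preq w_{i+1}$ holds automatically.

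The essential feature of $W_m$ is that the letter at position $k$ is $a_0$ exactly when $k$ is odd. Thus the tower splits naturally into the even-length prefixes ($\eps$ or those ending in some $a_j$ with $j \ge 1$) and the odd-length prefixes (ending in $a_0$). The two-state DFA $A_1$ would be built to accept exactly $\{\eps\} \cup \Sigma^*(\Sigma \setminus \{a_0\})$, capturing the former. The $(m+1)$-state NFA $A_0$ would be built to accept the latter, with every accepted string ending in $a_0$ so that $L(A_0) \cap L(A_1) = \emptyset$; its states $q_0,\dots,q_m$ track the ``depth'' of the current recursive $W_k$-block being processed, and every $a_0$-transition returns the computation to an accepting reset state. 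Membership of each $w_i$ in the correct language, and the resulting alternation, would then be verified by induction on the recursion defining $W_m$, which directly yields the tower of height $2^{m+1}$.

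The main obstacle is proving that no infinite tower exists between $L(A_0)$ and $L(A_1)$, since both languages are infinite: the NFA must have cycles simply to accept strings of length $2^{m+1}-1$ with only $m+1$ states. I would argue this by a structural analysis of the NFA's accepting runs, showing that any sufficiently long run must traverse a specific ``$a_0$-reset'' pattern whose iteration is incompatible with the growth in structure forced by alternating embedding through $L(A_1)$. Equivalently, one may exhibit a piecewise testable separator for $L(A_0)$ and $L(A_1)$ built from the cycle structure of $A_0$, and invoke the characterization of~\cite{icalp2013}---absence of an infinite tower is equivalent to separability by a piecewise testable language---to conclude. This last step is the delicate part, since one has to reconcile the existence of common-alphabet cycles in the two automata with the absence of an infinite tower.
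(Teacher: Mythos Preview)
Your recursive string $W_m$ is exactly the paper's: with $a_0$ playing the role of the paper's letter $b$, one has $W_m = u_m b$ where $u_0=\varepsilon$ and $u_k = u_{k-1}\,b\,a_k\,u_{k-1}$. So the tower of prefixes is identical, and the two-state DFA is fine.

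The real gap is the NFA. You never actually define it; you only say its states ``track the depth of the current recursive $W_k$-block'' and that ``every $a_0$-transition returns the computation to an accepting reset state.'' That description is not enough, and read literally it is dangerous: if the reset state is reachable after every $a_0$ and the automaton can then continue, you are close to accepting all of $\Sigma^* a_0$, and then $a_0,\ a_0 a_1,\ a_0 a_1 a_0,\ a_0 a_1 a_0 a_1,\dots$ is an infinite tower with your $L(A_1)$. The whole difficulty of the theorem is to design the $(m{+}1)$-state NFA so that (i) all the right prefixes are accepted and (ii) its language is sufficiently constrained that no infinite tower arises. The paper does this with a very specific layered automaton $A_m$: state $k$ has self-loops under $b$ and under all $a_j$ with $j<k$, and $a_k$-transitions from $k$ down to every lower state; state $0$ is the unique accepting state and has no outgoing transitions. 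This structure is what makes both halves of the argument work, and you have not supplied anything comparable. Note also that you have swapped the roles relative to the paper (your NFA takes the odd-length prefixes, the paper's NFA takes the even-length ones); this swap is not impossible, but it forces you to build an NFA for $L(A_m)\cdot a_0$ in $m{+}1$ states, which is an extra complication you do not address.

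The second gap is the absence of an infinite tower. You explicitly flag this as ``the delicate part'' and offer only two vague strategies (a ``structural analysis'' of runs, or producing a piecewise testable separator). Neither is carried out, and your worry about ``common-alphabet cycles'' shows you do not yet see the mechanism. The paper's argument is a short induction on $m$ that you are missing: any string accepted by $A_{m+1}$ from the top state $m{+}1$ has the form $\Sigma_m^*\,a_{m+1}\,v$ with $v$ accepted from a lower state; stripping the prefix up to and including $a_{m+1}$ from every element of a putative infinite tower produces an infinite tower between $L(A_m)$ and $L(B_m)$, contradicting the inductive hypothesis. Once the NFA is defined correctly, this induction is routine; without the NFA, there is nothing to induct on.
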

  \begin{proof}
    Let $m$ be a non-negative integer. We define a pair of NFAs $A_m$ and $B_m$ over the alphabet $\Sigma_m=\{b,a_1,a_2,\ldots,a_m\}$ with a tower of height $2^{m+1}$ between $L(A_m)$ and $L(B_m)$, and with no infinite tower. 
    
    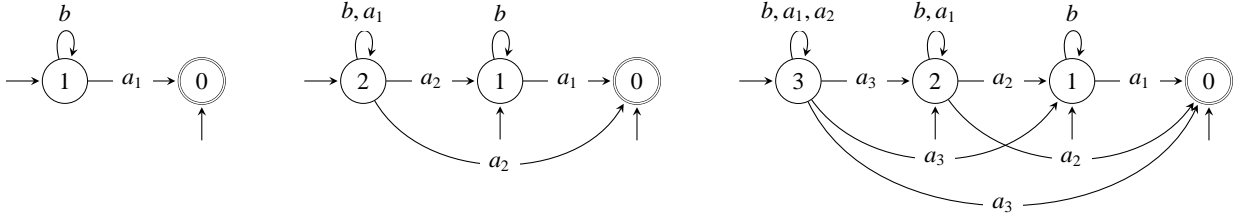
\begin{figure}
      \centering
      \begin{tikzpicture}[baseline,->,>=stealth,shorten >=1pt,node distance=1.8cm,
        state/.style={circle,minimum size=1mm,very thin,draw=black,initial text=},
        every node/.style={fill=white,font=\small},
        bigloop/.style={shift={(0,0.01)},text width=1.6cm,align=center}]
        \node[state,initial below,accepting]    (1) {$0$};
        \node[state,initial]                    (4) [left of=1] {$1$};
        \path
          (4) edge node {$a_1$} (1)
          (4) edge[loop above] node[bigloop] {$b$} (4);
      \end{tikzpicture}
      \qquad
      \begin{tikzpicture}[baseline,->,>=stealth,shorten >=1pt,node distance=1.8cm,
        state/.style={circle,minimum size=1mm,very thin,draw=black,initial text=},
        every node/.style={fill=white,font=\small},
        bigloop/.style={shift={(0,0.01)},text width=1.6cm,align=center}]
        \node[state,initial below,accepting]    (1) {$0$};
        \node[state,initial below]              (4) [left of=1] {$1$};
        \node[state,initial]                    (5) [left of=4] {$2$};
        \path
          (4) edge node {$a_1$} (1)
          (4) edge[loop above] node[bigloop] {$b$} (4)
          (5) edge[loop above] node[bigloop] {$b,a_1$} (5)
          (5) edge[bend right=60] node {$a_2$} (1)
          (5) edge node {$a_2$} (4);
      \end{tikzpicture}
      \qquad
      \begin{tikzpicture}[baseline,->,>=stealth,shorten >=1pt,node distance=1.8cm,
        state/.style={circle,minimum size=1mm,very thin,draw=black,initial text=},
        every node/.style={fill=white,font=\small},
        bigloop/.style={shift={(0,0.01)},text width=1.6cm,align=center}]
        \node[state,initial below,accepting]    (0) {$0$};
        \node[state,initial below]              (1) [left of=0] {$1$};
        \node[state,initial below]                    (2) [left of=1] {$2$};
        \node[state,initial]                    (3) [left of=2] {$3$};
        \path
          (3) edge node {$a_3$} (2)
          (2) edge node {$a_2$} (1)
          (1) edge node {$a_1$} (0)
          (3) edge[bend right=55] node {$a_3$} (1)
          (3) edge[bend right=65] node {$a_3$} (0)
          (2) edge[bend right=55] node {$a_2$} (0)
          (3) edge[loop above] node[bigloop] {$b, a_1, a_2$} (3)
          (2) edge[loop above] node[bigloop] {$b, a_1$} (2)
          (1) edge[loop above] node[bigloop] {$b$} (1);
      \end{tikzpicture}
      \caption{Automata $A_1$, $A_2$ and $A_3$.}
      \label{fig4}
      \label{fig6}
    \end{figure}
    
    The transition function $\gamma_m$ of the NFA $A_{m}=(\{0,1,2,\ldots,m\},\Sigma_{m},\gamma_{m},\{0,1,2,\ldots,m\},\{0\})$ consists of the self-loop under $b$ for all states but $0$, self-loops under $a_i$ for all states $j>i$, and $a_i$-transitions from state $i$ to all states $j<i$. Formally, $\gamma_m(i,b)=\{b\}$ if $i=1,2,\ldots,m$, $\gamma_m(i,a_j)=\{i\}$ if $m\geq i>j\geq 1$, and $\gamma_m(i,a_i)=\{0,1,\ldots,i-1\}$ if $m\geq i\geq 1$. The NFAs $A_1$, $A_2$ and $A_3$ are shown in Figure~\ref{fig6}. Note that $A_m$ is an extension of $A_{m-1}$, in particular, $L(A_{m-1})\subseteq L(A_m)$.

    We set $L(B_m)=\Sigma_m^*b$. The two-state NFA $B_m=(\{1,2\},\Sigma_m,\delta_m,\{1\},\{2\})$ consists of self-loops under all letters in state 1 and a $b$-transition from state 1 to the accepting state 2. The minimal DFA obtained from $B_m$ has two states, which fulfills the statement of the theorem.

    Define the string $\uu_m$ inductively by $u_0=\varepsilon$ and $\uu_{k}=\uu_{k-1}ba_{k}\uu_{k-1}$, for $0 < k \le m$. Note that $|\uu_kb|=2^{k+1}-1$. It is not hard to see that every prefix of $\uu_mb$ of odd length ends with $b$, therefore it is an element of $L(B_m)$. We now show, by induction on $m$, that every prefix of $\uu_mb$ of even length is accepted by $A_m$. Indeed, the empty string is accepted by $A_0=(\{0\},\{b\},\gamma_0,\{0\},\{0\})$, an automaton with no transitions. Consider a prefix $v$ of $\uu_mb$ of even length. If $|v|\leq 2^m-1$, then $v$ is a prefix of $\uu_{m-1}b$, hence, by the induction hypothesis, $v\in L(A_{m-1})\subseteq L(A_m)$. If $|v|>2^m-1$, then $v=\uu_{m-1}ba_mv'$, where $\uu_{m-1}b$ and $v'$ are over $\Sigma_{m-1}$. By the induction hypothesis, $v'$ is accepted in $A_m$ from a state $\ell\in\{0,1,\ldots,m-1\}$. The definition of $A_m$ yields that there is a path
    \[
      m\xrightarrow{\,\uu_{m-1}b\,} m\xrightarrow{\,a_m\,}\,\ell \xrightarrow{\ v'\,} 0\,.
    \]
    Thus, we have proved that the prefixes of $\uu_mb$ form a tower between $L(A_m)$ and $L(B_m)$ of height $2^{m+1}$.
    
    It remains to show that there is no infinite tower. Again, we can use the techniques of~\cite{icalp2013,mfcsPlaceRZ13}. However, to give a brief idea, notice that there is no infinite tower between $L(A_0)$ and $L(B_0)$. Consider a tower between $L(A_{m+1})$ and $L(B_{m+1})$. If every string of the tower belonging to $L(A_{m+1})$ is accepted from an initial state different from $m+1$, it is a tower between $L(A_{m})$ and $L(B_{m})$, hence it is finite by the induction hypothesis. Thus, if there is an infinite tower, there is also an infinite tower where all the strings of $L(A_{m+1})$ are accepted from state $m+1$. Every such string is of the form $\Sigma_{m}^* a_{m+1} v$, where $v$ is accepted from an initial state different from $m+1$. Cutting off the prefixes from $\Sigma_{m}^* a_{m+1}$ results in an infinite tower between $L(A_{m})$ and $L(B_{m})$, which is a contradiction.
  \end{proof}

  It is worth mentioning that the languages $L(A_m)$ are piecewise testable, hence they are the separators. The easiest way to see this is to transform $A_m$ to its minimal DFA and to use Trahtman's algorithm~\cite{Trahtman2001}.

  In the construction of the tower in the proof of Theorem~\ref{thm05}, the simplicity of the automaton $B_m$ is conspicuous. Introducing the structure present in the automaton $A_m$ also into $B_m$ can further quadratically increase the height of the tower. However, the nature of the exponential bound implies that the blow-up is only a constant in the exponent.
  
  \begin{thm}\label{thm:2exp}
    For every $m\ge 1$, there exist two NFAs with $m+1$ states over an alphabet of cardinality $2m$ having a tower of height $\Omega(2^{2m})$ and no infinite tower.
  \end{thm}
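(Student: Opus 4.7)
The plan is to build on Theorem~\ref{thm:exp} following the hint in the preceding paragraph: replace the trivial two-state automaton $B_m$ of that construction by a second copy of the $A_m$-style automaton on a disjoint set of markers, which should square the achievable tower length. I work over the alphabet $\Sigma=\{a_1,\dots,a_m,c_1,\dots,c_m\}$ of cardinality $2m$. The NFA $A$ is defined exactly like $A_m$ from the proof of Theorem~\ref{thm:exp}, except that the single marker $b$ is replaced by the $m$ letters $c_1,\dots,c_m$, all of which act as self-loops at every state other than the accepting state~$0$. The NFA $B$ is defined symmetrically, with the roles of the $a_i$s and $c_i$s interchanged. Both automata have $m+1$ states.

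The tower word $w_m$ is then built by a double recursion that refines the word $u_m$ from Theorem~\ref{thm:exp}: every occurrence of $b$ on the $A$-side is substituted by an entire copy of the $B$-side recursion, and vice versa. A concrete candidate is $w_0=\varepsilon$ and $w_k = w_{k-1}\,c_k\,w_{k-1}\,a_k\,w_{k-1}\,c_k\,w_{k-1}$, which has length $4^k-1$; for $k=m$ this gives $|w_m|=2^{2m}-1$. Following verbatim the inductive accepting-path argument used in the proof of Theorem~\ref{thm:exp}, I would show that the prefixes of $w_m$ alternate between $L(A)$ and $L(B)$ with every extension by one letter, so that the sequence of all prefixes is a subsequence tower of height $\Omega(2^{2m})$.

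Absence of an infinite tower is shown by induction on $m$, mirroring the last paragraph of the proof of Theorem~\ref{thm:exp}: any infinite tower must, after discarding finitely many initial strings, consist of strings whose accepting paths in $A$ and in $B$ both start from the top state $m$, and cutting off the common prefix up to the last occurrences of the top-level markers $a_m$ and $c_m$ reduces the question to an infinite tower over the strictly smaller alphabet $\{a_1,\dots,a_{m-1},c_1,\dots,c_{m-1}\}$, contradicting the induction hypothesis. The main obstacle I anticipate is calibrating the transitions of $A$ and $B$ so that, with the alphabet shrunk from a naive $2m+2$ to exactly $2m$, the double role of each letter---$a_k$ must be a descent in $A$ but a self-loop in $B$, and conversely for $c_k$---does not cause spurious acceptance that breaks the strict alternation along $w_m$; once this calibration is verified, the length estimate and the no-infinite-tower step are routine adaptations of what already appears in Theorem~\ref{thm:exp}.
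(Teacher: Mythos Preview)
Your symmetric construction has two failures that go beyond the ``calibration'' you anticipate. First, since both $A$ and $B$ are copies of the automaton $A_m$ of Theorem~\ref{thm:exp} (with relabelled alphabets), both have state~$0$ among their initial states and as their unique accepting state, so $\varepsilon\in L(A)\cap L(B)$; any common word yields the infinite tower $\varepsilon,\varepsilon,\varepsilon,\dots$, and the ``no infinite tower'' conclusion is simply false for these automata. Second, your word $w_m$ does not alternate $a$-letters and $c$-letters: $w_1=c_1a_1c_1$ ends in $c_1$, and $w_2$ contains the block $w_1c_2w_1=\cdots c_1\,c_2\,c_1\cdots$, so consecutive prefixes end in $c$-letters and cannot alternate between $L(A)$ and $L(B)$. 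In fact $c_1a_1c_1$ lies in neither language: in $B$, from state~$1$ the first $c_1$ reaches $0$ and dies, while from any state $k\ge 2$ all three letters are self-loops; in $A$, from state~$1$ the path is $1\xrightarrow{c_1}1\xrightarrow{a_1}0$ and then $c_1$ is undefined, and from $k\ge 2$ again all three letters are self-loops. So the prefix sequence of $w_m$ is not even a tower of height four.

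The paper's construction is deliberately \emph{asymmetric} to avoid both obstacles. It keeps the letter $b$ and the automaton $A_m$, and adds transitions $0\xrightarrow{c_i}\{1,\dots,m\}$ so that the new letters $c_1,\dots,c_{m-1}$ act as \emph{restart} markers in $A_m'$; the second automaton $B_m'$ carries the $A_m$-style descent structure on $c_1,\dots,c_{m-1}$, but with initial states $\{1,\dots,m\}$ only (so $\varepsilon\notin L(B_m')$), with all of $\Sigma_m=\{b,a_1,\dots,a_m\}$ as self-loops, and with the final step to state~$0$ taken by $b$ rather than by a $c$-letter. The tower word is then $2^{m-1}$ copies of $u_m$ separated by single letters $c_i$ and terminated by $b$: each copy of $u_m$ already alternates by Theorem~\ref{thm:exp}, and each separator contributes exactly one further prefix. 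Your interleaved recursion $w_k=w_{k-1}c_kw_{k-1}a_kw_{k-1}c_kw_{k-1}$ cannot play this role because it inevitably places a new $c_k$ adjacent to the old $c_1$ at the boundary of every $w_{k-1}$ block.
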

  \begin{proof}
    For every non-negative integer $m\ge 1$, we define a pair of NFAs $A_m'$ and $B_m'$ with $m+1$ states over the alphabet $\Sigma_m' = \Sigma_m \cup \{c_1,c_2,\ldots,c_{m-1}\}$, where $\Sigma_m = \{b,a_1,a_2,\ldots,a_{m}\}$ as in Theorem~\ref{thm:exp}, with a tower of height $2^{m}(2^{m}-1)+2$ between $L(A_m')$ and $L(B_m')$, and such that there is no infinite tower. Let $Q_m = \{0,1,\ldots,m\}$. 
    \begin{figure}
      \centering
      \begin{tikzpicture}[baseline,->,>=stealth,shorten >=1pt,node distance=1.8cm,
        state/.style={circle,minimum size=1mm,very thin,draw=black,initial text=},
        every node/.style={fill=white,font=\small},
        bigloop/.style={shift={(0,0.02)},text width=1.6cm,align=center}]
        \node[state,initial below,accepting]    (1) {$0$};
        \node[state,initial]                    (4) [left of=1] {$1$};
        \path
          (4) edge node {$a_1$} (1)
          (4) edge[loop above] node[bigloop] {$b$} (4);
      \end{tikzpicture}
      \qquad\qquad
      \begin{tikzpicture}[baseline,->,>=stealth,shorten >=1pt,node distance=1.8cm,
        state/.style={circle,minimum size=1mm,very thin,draw=black,initial text=},
        every node/.style={fill=white,font=\small},
        bigloop/.style={shift={(0,0.02)},text width=1.6cm,align=center}]
        \node[state,initial]    (1) {$1$};
        \node[state,accepting]  (2) [right of=1] {$0$};
        \path
          (1) edge[loop above] node[bigloop] {$b,a_1$} (1)
          (1) edge node {$b$} (2);
      \end{tikzpicture}
      \caption{The NFAs $A_1'$ (left) and $B_1'$ (right).}
      \label{fig5}
    \end{figure}
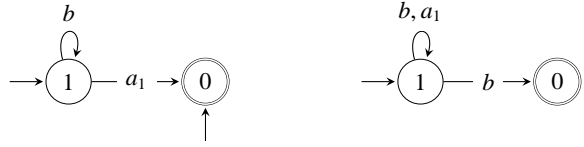
    
    The automaton $A_{m}'=(Q_{m},\Sigma_{m}',\gamma_{m}',Q_{m},\{0\})$ is obtained from the automaton $A_m$ of the proof of Theorem~\ref{thm:exp} by adding transitions from state $0$ to each state of $Q_{m}\setminus\{0\}$ under the letters $c_1,c_2,\ldots,c_{m-1}$. 
    
    The automaton $B_{m}'=(Q_{m},\Sigma_{m}',\delta_{m}',Q_{m}\setminus\{0\},\{0\})$ contains self-loops in each state $k\in\{1,2,\dots,m\}$ under the alphabet $\Sigma_m\cup\{c_1,c_2,\dots,c_{k-2}\}$, transitions under $c_{k-1}$ from state $k$ to state $i$ for each $k=2,3,\dots,m$ and $1\leq i < k$, and a $b$-transition from state $1$ to state $0$. 
    
    The automata $A_1'$ and $B_1'$ are shown in Figure~\ref{fig5} and the automata $A_3'$ and $B_3'$ in Figure~\ref{fig6b}.
    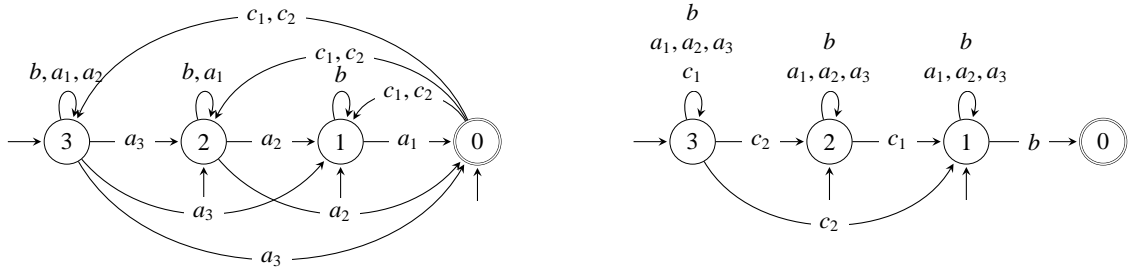
\begin{figure}[b]
      \centering
      \begin{tikzpicture}[baseline,->,>=stealth,shorten >=1pt,node distance=1.8cm,
        state/.style={circle,minimum size=1mm,very thin,draw=black,initial text=},
        every node/.style={font=\small},
        bigloop/.style={shift={(0,0.01)},text width=1.6cm,align=center}]
        \node[state,initial below,accepting]    (1) {$0$};
        \node[state,initial below]              (4) [left of=1] {$1$};
        \node[state,initial below]              (5) [left of=4] {$2$};
        \node[state,initial]                    (6) [left of=5] {$3$};
        \path
          (1) edge[bend right=60] node[fill=white] {$c_1,c_2$} (4)
          (1) edge[bend right=63] node[fill=white] {$c_1,c_2$} (5)
          (1) edge[bend right=66] node[fill=white] {$c_1,c_2$} (6)
          (4) edge node[fill=white] {$a_1$} (1)
          (4) edge[loop above] node[bigloop] {$b$} (4)
          (5) edge[loop above] node[bigloop] {$b,a_1$} (5)
          (6) edge[loop above] node[bigloop] {$b,a_1,a_2$} (6)
          (5) edge[bend right=50] node[fill=white] {$a_2$} (1)
          (5) edge node[fill=white] {$a_2$} (4)
          (6) edge[bend right=60] node[fill=white] {$a_3$} (1)
          (6) edge[bend right=50] node[fill=white] {$a_3$} (4)
          (6) edge node[fill=white] {$a_3$} (5);
      \end{tikzpicture}
      \qquad\qquad
      \begin{tikzpicture}[baseline,->,>=stealth,shorten >=1pt,node distance=1.8cm,
        state/.style={circle,minimum size=1mm,very thin,draw=black,initial text=},
        every node/.style={fill=white,font=\small},
        bigloop/.style={shift={(0,0.01)},text width=1.6cm,align=center}]
        \node[state,accepting]        (0) {$0$};
        \node[state,initial below]    (1) [left of=0]{$1$};
        \node[state,initial below]    (4) [left of=1] {$2$};
        \node[state,initial]          (5) [left of=4] {$3$};
        \path
          (1) edge node {$b$} (0)
          (4) edge node {$c_1$} (1)
          (1) edge[loop above] node[bigloop] {$b$\\$a_1,a_2,a_3$} (1)
          (4) edge[loop above] node[bigloop] {$b$\\$a_1,a_2,a_3$} (4)
          (5) edge[loop above] node[bigloop] {$b$\\$a_1,a_2,a_3$\\$c_1$} (5)
          (5) edge[bend right=60] node {$c_2$} (1)
          (5) edge node {$c_2$} (4);
      \end{tikzpicture}
      \caption{Automata $A_3'$ and $B_3'$, respectively.}
      \label{fig6b}
    \end{figure}
    
    We now define a string $\omega_m$ using the strings $\uu_i$ from the proof of Theorem~\ref{thm:exp} as follows. For $k=1,2,\ldots,m-1$, let 
    $\omega_{m,0} = \uu_{m}$, 
    $\omega_{m,k} = \omega_{m,k-1}\, c_{k}\, \omega_{m,k-1}$, and 
    $\omega_{m} = \omega_{m,m-1}\,b$.
    The string $\omega_m$ consists of $2^{m-1}$ occurrences of the string $\uu_m$ separated by some letters $c_i$. We prove by induction on $k=0,1,\dots,m-1$ that every prefix $v$ of $\omega_{m,k}$ ending with $b$ is accepted by $B_m'$ from an initial state $\ell_v \le k+1$, and every prefix of $\omega_{m,k}$ ending with some $a_i$ is accepted by $A_m'$. This is true for $\omega_{m,0}$ by Theorem 9, since $A_m'$, restricted to the alphabet $\Sigma_m$, yields $A_m$, and $B_m'$ accepts all strings over $\Sigma_m$ ending with $b$ from state $1$.
    Let $k\geq 1$ and consider the string $\omega_{m,k}$. The claim holds for prefixes of $\omega_{m,k-1}$ by induction. 
    
    Consider a prefix $v=\omega_{m,k-1}c_kv'$ ending with $a_i$. By the induction hypothesis, $A_m'$ accepts the string $\omega_{m,k-1}$ from some state $i_1$, and it accepts $v'$ from some initial state $i_2\neq 0$, since $v'$ does not begin with a $c_i$. Therefore, $A_m'$ accepts $v$ by the path $i_1\xrightarrow{\omega_{m,k-1}} 0 \xrightarrow{c_k} i_2 \xrightarrow{v'} 0$. 
    
    Let now $v=\omega_{m,k-1}c_kv'$ end with $b$. By the induction hypothesis, $v'$ is accepted by $B_m'$ from a state $\ell_{v'} \le k$, since $v'$ is a prefix of $\omega_{m,k-1}$. Moreover, $\omega_{m,k-1}$ contains no letter $c_i$, for $i>k-1$. Thus, the automaton $B_m'$ accepts $v$ by the path $k+1\xrightarrow{\,\omega_{m,k-1}\,} k+1 \xrightarrow{\,c_k\,} \ell_{v'} \xrightarrow{\ v'\,} 0$. This proves the claim.

    Since $A_m'$ accepts $\varepsilon$ and $B_m'$ accepts $\omega_m$, we have a tower of height $2^{m-1}(2^{m+1}-2)+2$, where $2^{m+1}-2$ is the length of $\uu_m$. 
    
    To prove that there is no infinite tower, we can use a similar argument as in the proof of Theorem~\ref{thm:exp}, where $B_m'$ plays the role of $A_m$ and $c_i$ the role of $a_i$, and the fact that there is no infinite tower between $A_m$ and $B_m$ over the alphabet $\Sigma_m$.
  \end{proof}
 
\subsection{The lower bound for DFAs}
  The exponential lower bounds presented above are based on NFAs. It is, however, an interesting question whether they can also be achieved for DFAs. We discuss this problem below.

  \begin{thm}\label{thm:expdfa}
    For every $n\ge 0$, there exist two DFAs with at most $n+1$ states over an alphabet of cardinality $\frac{n(n+1)}{2}+1$ having a tower of height $2^{n}$ and no infinite tower.
  \end{thm}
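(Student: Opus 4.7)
The construction adapts the NFA of Theorem~\ref{thm:exp} to the deterministic setting. The NFA $A_n$ was nondeterministic in two respects: it had multiple initial states $\{0,1,\dots,n\}$, and the $a_i$-transition from state $i$ branched into all states of $\{0,1,\dots,i-1\}$. The branching at transitions can be determinised by splitting each $a_i$ into $i$ copies $a_{i,0},a_{i,1},\dots,a_{i,i-1}$, where $a_{i,j}$ is interpreted deterministically as ``go from state $i$ to state $j$''. This yields the alphabet
\[
  \Sigma_n=\{b\}\cup\{a_{i,j}\mid 1\le i\le n,\ 0\le j<i\},
\]
whose cardinality is $1+\sum_{i=1}^{n} i=\tfrac{n(n+1)}{2}+1$, matching the statement.

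\noindent The DFA $A_n$ is defined on states $\{0,1,\dots,n\}$ with a single initial state, mirroring the NFA $A_n$: one sets $\delta(i,b)=i$ for $i\ge 1$, $\delta(i,a_{k,j})=i$ whenever $i>k$ (the self-loop of the NFA), and $\delta(i,a_{i,j})=j$ (the determinised choice). Transitions that are not forced by this scheme (those from state $0$, or those under $a_{k,j}$ with $i<k$) are either left undefined or redirected in a way that does not introduce any new cycle enabling alternating embedding; they must be placed carefully so that the automaton still has only $n+1$ states. The DFA $B_n$ is a small companion automaton, with at most $n+1$ states, recognising a $b$-marker language in the spirit of $B_n$ from Theorem~\ref{thm:exp}.

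\noindent The witness for the tower is defined recursively in the spirit of the strings $\uu_m$ in Theorem~\ref{thm:exp}: one builds a string $\omega_n$ of length $2^n-1$ by $\omega_0=\varepsilon$ and $\omega_n$ obtained by inserting at the appropriate place a letter $a_{n,j}$ whose deterministic target $j$ is exactly the state from which the remaining shorter tower, guaranteed to exist by induction, can be reproduced. An induction on $n$ then shows that the $2^n$ prefixes of $\omega_n$ are alternately accepted by $A_n$ and $B_n$, giving a tower of height $2^n$. The loss of a factor of two compared with the NFA's tower of height $2^{n+1}$ corresponds to the loss of the flexibility provided by the $n+1$ initial states of the NFA.

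\noindent Finally, the absence of an infinite tower is proved by induction on $n$, following the scheme of Theorem~\ref{thm:exp}: any infinite alternating tower between $L(A_n)$ and $L(B_n)$ must, from some point on, use the top state~$n$ of $A_n$, so some string of the tower necessarily contains an $a_{n,\ast}$-letter; cutting off the prefix up to and including this letter reduces the situation to an infinite tower between $A_{n-1}$ and $B_{n-1}$, contradicting the inductive hypothesis. The main technical obstacle is the completion of the deterministic transition function of $A_n$: one has to check both that the $n+1$ states suffice to accept the particular prefixes of $\omega_n$ taking part in the tower, and that the added transitions do not create cycles that would produce a fresh infinite alternating tower invisible to the inductive argument.
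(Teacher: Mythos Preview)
Your plan contains a fundamental gap: you propose that the tower consist of the $2^n$ \emph{prefixes} of a single string $\omega_n$. This cannot succeed. If $A_n$ and $B_n$ are DFAs with at most $n+1$ states each and there is no infinite tower between them, then in particular there is no infinite tower of prefixes, and Theorem~\ref{thm:dfas} bounds the height of any tower of prefixes by $(n+1)^2/2$, which is far below $2^n$. So whatever deterministic automata you write down, a prefix tower of height $2^n$ is impossible.

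The paper's construction avoids this obstruction by building a tower that is \emph{not} a tower of prefixes. The alphabet and the transitions $\delta(i,a_{i,j})=j$ are as you describe, but the self-loops are richer: at state $k$ one loops under every $a_{i,j}$ with $i\neq k$ and $j<k$, including the letters $a_{i,j}$ with $i>k$. These extra loops are exactly what allows the tower strings to contain blocks $\myalpha{k}{j}=a_{k,j}a_{k,j-1}\cdots a_{k,0}$: after the transition $i\xrightarrow{a_{i,k}}k$ the remaining letters $a_{i,k-1},\dots,a_{i,0}$ are absorbed by self-loops at $k$. The tower $(w_n(i))_{i=0}^{2^n-1}$ is then defined so that passing from $w_n(2i)$ to $w_n(2i+1)$ appends a $b$, while passing from $w_n(2i-1)$ to $w_n(2i)$ lengthens some block $\myalpha{k}{j'}$ to $\myalpha{k}{j}$ with $j>j'$ (letters are added at the \emph{front} of the block) and appends a new $\myalpha{\ell}{0}$. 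Thus $w_n(i)\preccurlyeq w_n(i+1)$ as subsequences, but $w_n(i)$ is in general not a prefix of $w_n(i+1)$. Your self-loop scheme (only $i>k$) and your prefix-based witness both have to be replaced by this mechanism; the ``loss of a factor of two'' is a side effect of this different encoding, not of the single initial state per se.
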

  \begin{proof}
    The main idea of the construction is to ``determinize'' the automata of the proof of Theorem~\ref{thm:exp}. Thus, for every non-negative integer $n$, we define a pair of deterministic automata $A_n$ and $B_n$ with $n+1$ and two states, respectively, over the alphabet $\Sigma_n=\{b\}\cup\{a_{i,j} \mid i=1,2,\dots,n;\, j=0,1,\dots,i-1\}$ with a tower of height $2^{n}$ between $L(A_n)$ and $L(B_n)$, and with no infinite tower.
    The two-state DFA $B_n=(\{1,2\},\Sigma_n,\gamma_n,1,\{2\})$ accepts all strings over $\Sigma_n$ ending with $b$ and is shown in Figure~\ref{fig4a} (right).
    \begin{figure}[t]
      \centering
      \begin{tikzpicture}[baseline,->,>=stealth,shorten >=1pt,node distance=1.8cm,
        state/.style={circle,minimum size=6mm,very thin,draw=black,initial text=},
        every node/.style={fill=white,font=\small},
        bigloop/.style={shift={(0,0.05)},text width=1.6cm,align=center}]
        \node[state,accepting]    (1) {$0$};
        \node[state]              (4) [left of=1] {$1$};
        \node[state]              (5) [left of=4] {$2$};
        \node[state,initial]              (6) [left of=5] {$3$};
        \path
          (4) edge node {$a_{1,0}$} (1)
          (4) edge[loop above] node[bigloop] {$b$\\$a_{2,0},a_{3,0}$} (4)
          (5) edge[loop above] node[bigloop] {$b,a_{1,0},$\\$a_{3,0},a_{3,1}$} (5)
          (6) edge[loop above] node[bigloop] {$b,a_{1,0},$\\$a_{2,0},a_{2,1}$} (6)
          (5) edge[bend right=55] node {$a_{2,0}$} (1)
          (5) edge node {$a_{2,1}$} (4)
          (6) edge node {$a_{3,2}$} (5)
          (6) edge[bend right=55] node {$a_{3,1}$} (4)
          (6) edge[bend right=60] node {$a_{3,0}$} (1) ;
      \end{tikzpicture}
      \qquad\qquad
      \begin{tikzpicture}[baseline,->,>=stealth,shorten >=1pt,node distance=2.3cm,
        state/.style={circle,minimum size=6mm,very thin,draw=black,initial text=},
        every node/.style={font=\small}]
        \node[state,initial]    (1) {$1$};
        \node[state,accepting]  (2) [right of=1] {$2$};
        \path
          (1) edge[loop above] node {$\Sigma_n\setminus\{b\}$} (1)
          (2) edge[loop above] node {$b$} (2)
          (1) edge[bend left] node[fill=white] {$b$} (2)
          (2) edge[bend left] node[fill=white] {$\Sigma_n\setminus\{b\}$} (1);
      \end{tikzpicture}
      \caption{The DFA $A_3$ (left) and the two-state DFA $B_n$ (right), $n\ge 0$.}
      \label{fig4a}
      \label{fig5a}
      \label{figA3var}
    \end{figure}
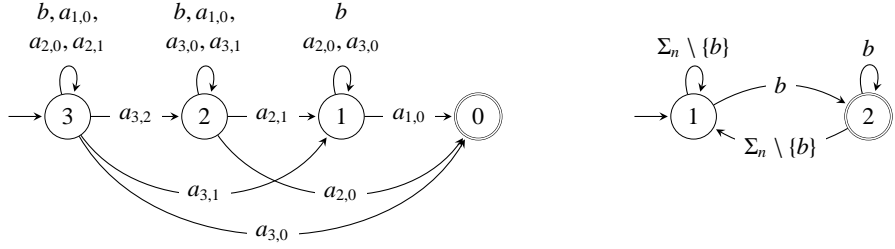
    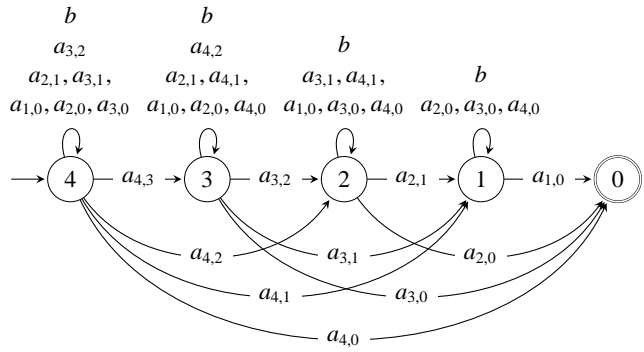
\begin{figure}[b]
      \centering
      \begin{tikzpicture}[->,>=stealth,shorten >=1pt,node distance=1.8cm,
        state/.style={circle,minimum size=6mm,very thin,draw=black,initial text=},
        every node/.style={fill=white,font=\small},
        bigloop/.style={shift={(0,0.05)},text width=1.6cm,align=center}]
        \node[state,accepting]    (1) {$0$};
        \node[state]              (4) [left of=1] {$1$};
        \node[state]              (5) [left of=4] {$2$};
        \node[state]              (6) [left of=5] {$3$};
        \node[state, initial]     (7) [left of=6] {$4$};
        \path
          (4) edge node {$a_{1,0}$} (1)
          (4) edge[loop above] node[bigloop] {$b$\\$a_{2,0},a_{3,0},a_{4,0}$} (4)
          (5) edge[loop above] node[bigloop] {$b$\\$a_{3,1},a_{4,1},$\\$a_{1,0},a_{3,0},a_{4,0}$} (5)
          (6) edge[loop above] node[bigloop] {$b$\\$a_{4,2}$\\$a_{2,1},a_{4,1},$\\$a_{1,0},a_{2,0},a_{4,0}$} (6)
          (7) edge[loop above] node[bigloop] {$b$\\$a_{3,2}$\\$a_{2,1},a_{3,1},$\\$a_{1,0},a_{2,0},a_{3,0}$} (7)
          (5) edge[bend right=55] node {$a_{2,0}$} (1)
          (5) edge node[fill=white] {\small{$a_{2,1}$}} (4)
          (6) edge node {$a_{3,2}$} (5)
          (6) edge[bend right=55] node {$a_{3,1}$} (4)
          (6) edge[bend right=60] node {$a_{3,0}$} (1)
          (7) edge node {$a_{4,3}$} (6)
          (7) edge[bend right=55] node {$a_{4,2}$} (5)
          (7) edge[bend right=60] node {$a_{4,1}$} (4)
          (7) edge[bend right=65] node {$a_{4,0}$} (1) ;
      \end{tikzpicture}
      \caption{Automaton $A_4$.}
      \label{figA4var}
    \end{figure}
    
    The ``determinization'' idea of the construction of the DFA $A_n=(\{0,1,\dots,n\},\Sigma_n,\delta_n,n,\{0\})$ is to use the automaton $A_n$ from the proof of Theorem~\ref{thm:exp}, and to eliminate nondeterminism by relabeling every transition $i\xrightarrow{a_i} j$ with a new unique letter $i\xrightarrow{a_{i,j}} j$. Then the tower of Theorem~\ref{thm:exp} is modified by relabeling the corresponding letters. However, to preserve embeddability of the new letters, several self-loops must be added. Specifically, the transition function $\delta_n$ is defined as follows.
    For every $a_{i,j}\in \Sigma_n$, define the transition $\delta_n(i,a_{i,j}) = j$.
    For every $k = 1,2,\dots,n$ and $a_{i,j}\in \Sigma_n$ such that $i\neq k$ and $j < k$, define the self-loop $\delta_n(k, a_{i,j}) = k$.
    Finally, add the self-loops $\delta_n(k,b) = k$ to every state $k=1,2,\dots,n$, see Figures~\ref{figA3var} and~\ref{figA4var} for illustration (as usual for DFAs, all undefined transitions go to a new sink state that is not depicted for simplicity). 
    
    For every $1\le k\le n$ and $0\le j < k$, let $\myalpha k j = a_{k,{j}}a_{k,j-1}\cdots a_{k,0}$, and let the strings $u_k$ be defined by $u_0=\varepsilon$ and $u_{k}=u_{k-1}b\, \myalpha{k}{k-1}\, u_{k-1}$. Note that $u_kb$ contains $2^k$ letters $b$.
    The tower of height $2^n$ between the languages $L(A_n)$ and $L(B_n)$ is the sequence $w_n(0), w_n(1), \dots, w_n(2^n-1)$, where the longest string is defined by 
    \[
      w_n=w_n(2^n-1)=\myalpha{n}{n-1}\,u_{n-1}b\in L(B_n)\,.
    \]
    Every string $w_n(2i)$ is obtained from the string $w_n(2i+1)$ by removing the last letter, which is $b$, and every string $w_n(2i-1)$ is obtained from the string $w_n(2i)$ by replacing some $\myalpha k j$ with its suffix, that is, with $\myalpha k{j'}$, where $j'\leq j$, or with the empty string, see Figure~\ref{figTower} for the case $n=3$.
    \begin{figure}
      \centering
      \begin{align*}
        w_3(0)&=\underline{a_{3,0}}\\
        w_3(1)&=\underline{a_{3,0}}\,b\\
        w_3(2)&=\underline{a_{3,1}}a_{3,0}\,b\,\underline{a_{1,0}}\\
        w_3(3)&=\underline{a_{3,1}}a_{3,0}\,b\,\underline{a_{1,0}}\,b\\
        w_3(4)&=\underline{a_{3,2}}a_{3,1}a_{3,0}\,b\,a_{1,0}\,b\,\underline{a_{2,0}}\\
        w_3(5)&=\underline{a_{3,2}}a_{3,1}a_{3,0}\,b\,a_{1,0}\,b\,\underline{a_{2,0}}\,b\\
        w_3(6)&=\underline{a_{3,2}}a_{3,1}a_{3,0}\,b\,a_{1,0}\,b\,\underline{a_{2,1}}a_{2,0}\,b\,\underline{a_{1,0}}\\
        w_3(7)&=\underline{a_{3,2}}a_{3,1}a_{3,0}\,b\,a_{1,0}\,b\,\underline{a_{2,1}}a_{2,0}\,b\,\underline{a_{1,0}}\,b
      \end{align*}
      \caption{The tower between $L(A_3)$ and $L(B_3)$. We underline transitions between different states in $A_3$.}
      \label{figTower}
    \end{figure}
  
    More explicitly, the tower is defined recursively using the towers for $A_k$ and $B_k$, where $1\leq k<n$. For any $k\geq 1$, we define
    $w_k(0)=\myalpha k 0=a_{k,0}$ and
    $w_k(1)=a_{k,0}\, b$.
    For $i\geq 2$, let  
    \begin{align}\label{eq:wki}
      w_k(i)=\myalpha k \jjj\, u_{\jjj-1}\,b\, w_{\jjj}\left(i-2^{\jjj}\right).    
    \end{align}
    By induction, we verify that this definition fits the above definition of $w_n$. Since $\floor{\log{(2^n-1)}}=n-1$, we obtain that
    $w_n(2^n-1) =\myalpha n{n-1}\, u_{n-2}\,b\, w_{n-1}(2^{n-1}-1) = \myalpha n{n-1}\, u_{n-2}\,b\, \myalpha{n-1}{n-2}\, u_{n-2}\,b = \myalpha{n}{n-1}\, u_{n-1}\,b$.
    By \eqref{eq:wki}, the relationship between $w_n(i)$ and $w_n(i+1)$ is in most cases directly induced by the relationship between $w_\jjj(i-2^\jjj)$ and $w_\jjj(i-2^\jjj+1)$. 
       A special case is when $i$ is of the form $2^\ell-1$ for some $\ell > 1$. Then $\ell-1=\floor{\log i}\neq \floor{\log (i+1)}=\ell$ and we have
    \begin{align}\label{eq:prechod}
      \begin{split}
        w_n\left(2^\ell-1\right) & = \myalpha{n}{\ell-1}\,u_{\ell-2}\,b\,w_{\ell-1}\left(2^{\ell-1}-1\right) 
                                = \myalpha{n}{\ell-1}\,u_{\ell-2}\,b\,\myalpha{\ell-1}{\ell-2}\,u_{\ell-2}\,b 
                                = \myalpha{n}{\ell-1}\,u_{\ell-1}\,b,\\
        w_n\left(2^\ell\right)   & =\myalpha n \ell\,u_{\ell-1}\,b\,w_{\ell}(0)\,,
      \end{split}
    \end{align}
    that is, $w_n(i+1)=a_{n,\ell}\,w_n(i)\, a_{\ell,0}$.

    We now prove that the sequence is the required tower. If $n=1$, the tower is $a_{1,0}$, $a_{1,0}\,b$. Let $n>1$. The definition implies that $w_n(i)$ is in $L(B_n)$ (that is, it ends with $b$) if and only if $i$ is odd. Consider $w_n(i)$ with even $i \ge 2$. 
    The path in $A_n$ defined by $w_n(i)$ can be by~(\ref{eq:wki}) decomposed as  
    \[
      n\xrightarrow{\,a_{n,\jjj}\,} \jjj\xrightarrow{\,{\myalpha{n}{\jjj-1}}\,u_{\jjj-1}\,b\,} \jjj\xrightarrow{\,w_{\jjj}\left(i-2^{\jjj}\right)\,} 0\,.
    \]
    For the second part, note that both the alphabet of ${\myalpha{n}{\jjj-1}}$ and the alphabet $\{b\}\cup \{a_{m,m'} \mid m\leq \jjj-1, m' < m\}$ of $u_{\jjj-1}\, b$ are contained in the alphabet of self-loops of state $\jjj$. The last part of the path follows by induction, since $\jjj<n$, $i-2^\jjj\leq 2^\jjj-1$, and $i-2^\jjj$ is even.

    Finally, we observe that $w_k(i)\preccurlyeq w_k(i+1)$. This follows by induction from \eqref{eq:wki} if  $\floor{\log (i+1)}=\floor{\log i}$, and from~\eqref{eq:prechod} if $i=2^\ell-1$.
  \end{proof}

  We now prove that the ``determinization'' idea of the previous theorem can be generalized. However, compared to the proof of Theorem~\ref{thm:expdfa}, the general procedure suffers from the increase of states. The reason why we do not need to increase the number of states in the proof of Theorem~\ref{thm:expdfa} is that there is an order in which the transitions/states are used/visited, and that the nondeterministic transitions are acyclic. 

  \begin{thm}\label{determinisation}
    For every two NFAs $A$ and $B$ with at most $n$ states and $m$ input letters, there exist two DFAs $A'$ and $B'$ with $O(n^2)$ states and $O(m+n)$ input letters such that there is a tower of height $r$ between $A$ and $B$ if and only if there is a tower of height $r$ between $A'$ and $B'$. In particular, there is an infinite tower between $A$ and $B$ if and only if there is an infinite tower between $A'$ and $B'$.
  \end{thm}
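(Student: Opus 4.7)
The plan is to determinize each NFA by forcing the chosen accepting path to be spelled out explicitly in the input via fresh ``state-indicator'' letters. After a standard WLOG reduction that adds at most one state to each NFA to give it a unique initial state $q_0^A$ (respectively $q_0^B$), set $\Sigma' = \Sigma \cup \{\sigma^A_q : q \in Q_A\} \cup \{\sigma^B_p : p \in Q_B\}$, giving $|\Sigma'|=O(m+n)$. Build $A'$ with state set $\{s_0, \bot\} \cup Q_A \cup (Q_A \times Q_A)$, hence $O(n^2)$ states: $s_0 \xrightarrow{\sigma^A_{q_0^A}} q_0^A$; from $q\in Q_A$, reading $\sigma^A_p$ moves to $(q,p)$; from $(q,p)$, reading $a\in\Sigma$ moves to $p$ iff $p\in\delta_A(q,a)$, and reading $\sigma^A_r$ overwrites the commitment to $(q,r)$; every state has a self-loop on every $\sigma^B$-letter; all other transitions sink. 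The accepting states are $F_A$, and $B'$ is defined symmetrically. Then $L(A')$ consists of all encodings $\sigma^A_{q_0^A}\sigma^A_{q_1}a_1\sigma^A_{q_2}a_2\cdots\sigma^A_{q_k}a_k$ of accepting $A$-paths, enriched with arbitrary $\sigma^B$-insertions (absorbed by self-loops) and additional same-type $\sigma^A$-insertions tolerated by the overwrite convention.

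For the reverse direction, the morphism $\pi\colon \Sigma'^*\to\Sigma^*$ erasing every state indicator satisfies $\pi(L(A'))=L(A)$, $\pi(L(B'))=L(B)$, is monotone under $\preccurlyeq$, and preserves disjointness of languages. Consequently, any tower $(w_i')_{i=1}^r$ between $A'$ and $B'$ projects to a tower $(\pi(w_i'))_{i=1}^r$ between $A$ and $B$ of the same height; the same argument handles infinite towers.

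For the forward direction, given a tower $(w_i)_{i=1}^r$ between $A$ and $B$, choose any accepting path $\pi_i$ for each $w_i$, set $w_1'=c^{X_1}(w_1,\pi_1)$ where $X_1\in\{A,B\}$ is the automaton accepting $w_1$, and inductively define $w_i'$ by taking the canonical encoding $c^{X_i}(w_i,\pi_i)$ and inserting, in the slots dictated by the embedding $w_{i-1}\preccurlyeq w_i$, every letter of $w_{i-1}'$ absent from $c^{X_i}(w_i,\pi_i)$. Opposite-type indicators may be placed anywhere (they are self-loops for the DFA accepting $w_i'$), while same-type $\sigma^{X_i}$-indicators are placed \emph{immediately before} the following canonical $\sigma^{X_i}$-letter so that the canonical overwrites them and the intended path survives. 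This guarantees $w_i'\in L(A')$ (respectively $L(B')$) and $w_{i-1}'\preccurlyeq w_i'$, yielding the desired tower.

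The main obstacle is propagation: letters inserted into an early $w_j'$ must remain a subsequence of every subsequent $w_i'$ across many $A/B$-alternations of the tower, so the DFAs have to accommodate an unbounded number of same-type indicator insertions. The overwrite convention for $\sigma$-letters resolves this within $O(n^2)$ states, while the single-initial-state assumption avoids the parallel issue for the initial indicator, which would otherwise require admitting several conflicting initial commitments from $s_0$.
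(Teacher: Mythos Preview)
Your construction is essentially the same as the paper's: introduce one fresh indicator letter per state of $A\cup B$, split each transition $s\xrightarrow{a}t$ through an intermediate ``pair'' state indexed by $(s,t)$, and use the erasing projection for the reverse direction. The state and alphabet counts match.

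The notable difference is in the forward-direction encoding. The paper gives the intermediate states $\sigma_{s,t}$ self-loops on \emph{all} new letters (not only opposite-type ones), so the \emph{first} indicator letter in a block fixes the target and every later one is absorbed. This lets the paper build $w_i'$ non-recursively: writing $w_i=x_{i,1}\cdots x_{i,n}$ aligned with a fixed embedding chain, it sets
\[
  w_i'=\alpha_{i,1}\cdots\alpha_{i,n},\qquad \alpha_{i,j}=p_{i,j}\,p_{i-1,j}\cdots p_{1,j}\,x_{i,j},
\]
where $p_{k,j}$ is the indicator of the target state of the $j$th step in the chosen path $\pi_k$ (or empty if $x_{k,j}=\varepsilon$). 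Acceptance is immediate because $p_{i,j}$ is read first, and $w_{i-1}'\preccurlyeq w_i'$ is immediate because $\alpha_{i-1,j}$ is a suffix of $\alpha_{i,j}$.

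Your ``overwrite'' semantics forces the canonical indicator to come \emph{last}, hence the recursive insertion scheme and the special placement rule for same-type indicators. That can be made to work, but as written it is underspecified: ``opposite-type indicators may be placed anywhere'' is not true if you simultaneously move the same-type ones, since you must keep the relative order of \emph{all} inserted letters to maintain $w_{i-1}'\preccurlyeq w_i'$; and ``immediately before the following canonical $\sigma^{X_i}$-letter'' is ambiguous at the very beginning of the word, where the initial $s_0\xrightarrow{\sigma^A_{q_0^A}}q_0^A$ step does not tolerate a preceding same-type indicator. Both issues disappear if you adopt the paper's self-loop semantics and its all-levels-at-once encoding.
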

  \begin{proof}
    Let $A$ and $B$ be two NFAs with at most $n$ states over an alphabet $\Sigma$ of cardinality $m$. Without loss of generality, we may assume that the automata each have a single initial state. Let $Q_A$ and $Q_B$ denote their respective sets of states. We modify the automata $A$ and $B$ to obtain the DFAs $A'$ and $B'$ as follows. Let $Q_{A'}=Q_A\cup \{\sigma_{s,t} \mid s, t\in Q_A\}$ and $Q_{B'}=Q_B\cup \{\sigma_{s,t} \mid s,t\in Q_B\}$, where $\sigma_{s,t}$ are new states. We introduce a new letter $y_{t}$ for every state $t\in Q_A\cup Q_B$. It results in $O(n^2)$ states and $O(m+n)$ letters. The transition function is defined as follows. In both automata, each transition $s \xrightarrow{a} t$ is replaced with two transitions $s \xrightarrow{y_{t}} \sigma_{s,t}$ and $\sigma_{s,t} \xrightarrow{a} t$. Moreover, self-loops in all new states are added over all new letters. Note that all transitions are deterministic in $A'$ and $B'$.
    
    We now prove that if there is a tower of height $r$ between $A$ and $B$, then there is a tower of height $r$ between $A'$ and $B'$.
    Let $\left(w_i\right)_{i=1}^{r}$ be a tower between $A$ and $B$. Let
    \[
      w_i=x_{i,1}x_{i,2}\cdots x_{i,n}\,,
    \]
    where $n=|w_r|$ and $x_{i,j}$ is either a letter or the empty string such that $x_{i,j}\preccurlyeq x_{i+1,j}$, for each $i=1,2,\dots,r-1$ and $j=1,2,\dots,n$.
    For every $w_i$, we fix an accepting path $\pi_i$ in the corresponding automaton.
    Let $p_{i,j}$ be the letter $y_{t}$ where $s \xrightarrow{} t$ is the transition corresponding to $x_{i,j}$ in $\pi_i$ if $x_{i,j}$ is a letter, 
    and let $p_{i,j}$ be empty if $x_{i,j}$ is empty. 
    We define 
    \[
      w_i'=\myalpha{i}{1}\myalpha{i}{2}\cdots \myalpha{i}{n}\,,
    \]
    where $\myalpha{i}{j}=p_{i,j}p_{i-1,j}\cdots p_{1,j} a $ if $x_{i,j}=a$, and empty otherwise.
    It is straightforward to verify that  $(w_i')_{i=1}^r$ is a tower of height $r$ between $A'$ and $B'$.
    
    Let now $\left(w_i'\right)_{i=1}^r$ be a tower between $A'$ and $B'$. We show that $\left(p(w_i')\right)_{i=1}^r$ is a tower between $A$ and $B$, where $p$ is a projection erasing all new letters. Obviously, we have $p(w_i')\preccurlyeq p(w_{i+1}')$.
		We now show that if a string $w'$ is accepted by $A'$, then $p(w')$ is accepted by $A$.  
    Let $\pi'$ be the path accepting $w'$, and let $\tau_1'$, $\tau_2'$, \ldots, $\tau_k'$ denote the sequence of all transitions of $\pi'$ labeled with letters from $\Sigma$ in the order they appear in $\pi'$. 
    By construction, $\tau_i'$ is of the form $\sigma_{s,t}\xrightarrow{a} t$.
    Let $\tau_i$ be $s\xrightarrow{a} t$. We claim that $\tau_1$, $\tau_2$, \dots, $\tau_k$ is an accepting path of $p(w')$ in $A$. Indeed, the transitions $\tau_i$ are valid transitions in $A$ by the construction of $A'$. 
    Let $\tau'_{i}$, $i<k$, end in $s$. Then the successive state in $\pi'$ is $\sigma_{s,t}$, for some $t\in Q_A$, and the transition $\tau'_{i+1}$ must be $\sigma_{s,t}\xrightarrow{a} t$, for some $a\in \Sigma$, since all $\Sigma$-transitions from $\sigma_{s,t}$ end in $t$, and all other transitions are self-loops. This proves that the transitions $\tau_1$, $\tau_2$, \dots, $\tau_k$ form a path in $A$, which is accepting, since $\pi'$ is accepting and $A$ has the same initial state and accepting states as $A'$. Analogously for $B'$ and $B$. The proof is completed by~\cite[Lemma~6]{icalp2013}, which shows that there is an infinite tower if and only if there is a tower of arbitrary height. 
  \end{proof}

  A similar construction yields the following variant of the previous theorem.
  \begin{customthm}{\ref{determinisation}'}\label{determinisation2}
    For every two NFAs $A$ and $B$ with at most $n$ states and $m$ input letters, there exist two DFAs $A'$ and $B'$ with $O(mn)$ states and $O(m n)$ input letters such that there is a tower of height $r$ between $A$ and $B$ if and only if there is a tower of height $r$ between $A'$ and $B'$. In particular, there is an infinite tower between $A$ and $B$ if and only if there is an infinite tower between $A'$ and $B'$.
  \end{customthm}
  \begin{proof}
    Let $Q_{A'}=Q_A\cup \{\sigma_{a,t} \mid a\in \Sigma,\, t\in Q_A\}$ and $Q_{B'}=Q_B\cup \{\sigma_{a,t} \mid a\in \Sigma,\, t\in Q_A\}$, where $\sigma_{a,t}$ are new states. New letters are $a_t$ for every state $t\in Q_A\cup Q_B$ and every letter $a\in \Sigma$. We have $O(mn)$ states and letters. Each transition $s \xrightarrow{a} t$, in both automata, is replaced with two transitions $s \xrightarrow{a_{t}} \sigma_{a,t}$ and $\sigma_{a,t} \xrightarrow{a} t$. Self-loops in all new states are added over all new letters.

    The rest of the proof is analogous to the proof of Theorem \ref{determinisation}. We just need to slightly modify the argument that $\tau_1$, $\tau_2$, \dots, $\tau_k$ is an accepting path of $p(w')$ in $A$. Let $\sigma_{a,t}\xrightarrow{a}t$ and $\sigma_{b,q}\xrightarrow{b}q$ be two successive transitions with labels from $\Sigma$ in $\pi'$. Then the transition $\sigma_{a,t}\xrightarrow{a}t$ is in $\pi'$ necessarily followed by a transition $t\xrightarrow{b_q}\sigma_{b,q}$, which shows that $t\xrightarrow{b}q$ is a valid transition in $A$ starting in the final state of the previous transition. This was to be shown.
  \end{proof}

\section{Towers of prefixes}\label{TofPref}
  The definition of towers can be generalized from the subsequence relation to any relation on strings. However, we are particularly interested in towers of prefixes. The reason is that the lower bounds on the height of finite towers for NFAs have been obtained by towers of prefixes, that is, every string $w_i$ of the tower is a prefix of the string $w_{i+1}$. Even though we are not aware of any tower for $n$-state NFAs that would be higher then the bound on towers of prefixes for $n$-state NFAs (see below), the analysis of this section indicates that such towers exist. On the other hand, if the automata are deterministic, we show that the difference in the height between towers and towers of prefixes is exponential, cf. Theorem~\ref{thm:expdfa} and the results below.
  
  It is obvious that any tower of prefixes is also a tower (of subsequences). Hence, the non-existence of an infinite tower implies the non-existence of an infinite tower of prefixes. However, the existence of an infinite tower does not imply the existence of an infinite tower of prefixes. This can be easily seen by considering the languages $L_1=a(ba)^*$ and $L_2=b(ab)^*$. Indeed, there is no infinite tower of prefixes, since every string of $L_1$ begins with $a$ and thus cannot be a prefix of a string of $L_2$, and vice versa. But there is an infinite tower, namely, $a, bab, ababa, \ldots$. 

  In~\cite{icalp2013}, it was shown that if there exist towers of arbitrary height, then there exists an infinite tower. More precisely, this property was shown to hold for any relation that is a well quasi order (WQO), in particular for the subsequence relation. Since the prefix relation is not a WQO, the property does not hold for it in general. However, it still holds for regular languages.

  \begin{lem}\label{lem001}
    Let $K$ and $L$ be regular languages. If there are towers of prefixes of arbitrary height between $K$ and $L$, then there is an infinite tower of prefixes between them.
  \end{lem}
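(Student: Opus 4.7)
The plan is to lift the hypothesis of arbitrarily long towers to an infinite tower by exploiting the finite-state structure of $K$ and $L$. Fix DFAs $M_K=(Q_K,\Sigma,\delta_K,q_K^0,F_K)$ and $M_L$ analogously, and for $w\in\Sigma^*$ write $\sigma(w)=(\delta_K(q_K^0,w),\delta_L(q_L^0,w))\in Q_K\times Q_L$. First dispose of the trivial case: if $K\cap L\neq\emptyset$, pick $w\in K\cap L$; then the constant sequence $w,w,w,\ldots$ is an infinite tower of prefixes (prefix is reflexive and both alternation clauses are satisfied simultaneously). Hence assume $K\cap L=\emptyset$, which forces $w_i<w_{i+1}$ strictly in every tower of prefixes between $K$ and $L$ and makes the language containing $w_i$ determined by the parity of~$i$.

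Next, apply pigeonhole. Pick a tower of prefixes $w_1<w_2<\cdots<w_r$ with $r>2|Q_K||Q_L|$; each position carries a label $(\sigma(w_i),i\bmod 2)$ ranging over a set of size at most $2|Q_K||Q_L|$, so some $i<j$ share the same label. Then $w_i$ and $w_j$ lie in the same language (say $K$), and since consecutive tower elements lie in different languages we have $j\geq i+2$. Define $u=w_{i+1}w_i^{-1}$ and $v=w_jw_i^{-1}$; both are nonempty and $u$ is a proper prefix of $v$. The crucial consequence of $\sigma(w_i)=\sigma(w_j)$ is that reading $v$ from the state pair $\sigma(w_i)$ returns to $\sigma(w_i)$, so by induction $\sigma(w_iv^k)=\sigma(w_i)$ and $\sigma(w_iv^ku)=\sigma(w_iu)=\sigma(w_{i+1})$ for every $k\geq 0$.

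With this cycle in hand, define the infinite sequence
\[
  x_{2k}=w_iv^k,\qquad x_{2k+1}=w_iv^ku.
\]
The prefix chain $x_0\leq x_1\leq x_2\leq\cdots$ is immediate: within a pair one appends $u$, and between pairs one uses $u<v$; lengths strictly increase, so the sequence is genuinely infinite. For alternation, the state-pair computation above shows that every $x_{2k}$ has the same state pair as $w_i\in K$, hence lies in $K$, and every $x_{2k+1}$ has the same state pair as $w_{i+1}\in L$, hence lies in $L$. Disjointness of $K$ and $L$ makes this consistent with the alternation clauses, so $(x_k)_{k\geq 0}$ is the required infinite tower of prefixes.

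The main subtlety is that the loop word $v$ loops the state pair only when started from $\sigma(w_i)$, not from intermediate state pairs $\sigma(w_{i+1}),\ldots,\sigma(w_{j-1})$, so one cannot splice the full segment $w_i,\ldots,w_j$ and repeat it; only the two-step anchor $w_i,w_{i+1}$ may be re-used, interleaved with copies of~$v$. Matching the parity of $i$ and $j$, and not merely the state pair, is also essential: a same-state-pair match with opposite parities would place $\sigma(w_i)$ simultaneously in $F_K\times Q_L$ and $Q_K\times F_L$, forcing $w_i\in K\cap L$ and contradicting the standing assumption—this is exactly why the pigeonhole threshold is $2|Q_K||Q_L|$ rather than $|Q_K||Q_L|$.
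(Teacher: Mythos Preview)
Your proof is correct and follows essentially the same route as the paper: pigeonhole on the product-automaton state reached by tower elements, extract a cycle, and pump it to produce the infinite tower. One quibble on exposition: your closing paragraph calls matching the parity of $i$ and $j$ ``essential'', but the argument you give there actually shows that a same-state-pair match with opposite parities is \emph{impossible} under disjointness, so same state pair already forces same language; hence the paper's threshold $|Q_K||Q_L|$ suffices and your factor of~$2$ is harmless but unnecessary.
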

  \begin{proof}
    Assume that the languages are given by minimal DFAs $A=(Q_A,\Sigma,\delta_A,q_A,F_A)$ and $B=(Q_B,\Sigma,\delta_B,q_B,F_B)$ and that there is no infinite tower of prefixes between $L(A)$ and $L(B)$. In particular, the languages are disjoint. Consider the product automaton $A\times B=(Q_A\times Q_B,\Sigma,\delta,(q_A,q_B),\emptyset)$. Let $(w_i)_{i=1}^{r}$ be a tower of prefixes between $L(A)$ and $L(B)$. It defines a corresponding sequence of states $(\delta((q_A,q_B),w_i))_{i=1}^{r}$. Assume that $r > |Q_A\times Q_B|$. Then there exists a state, $(p,q)$, that appears at least twice in the sequence. Let $v_1,v_2\in\{w_1,w_2,\ldots,w_r\}$ be two different strings (of the same language $L(A)$ or $L(B)$) such that $\delta((q_A,q_B),v_j)=(p,q)$, for $j=1,2$, and $v_1\le v_2$. Because $v_1$ and $v_2$ belong to the same language, there must exist $v_1'\in\{w_1,w_2,\ldots,w_r\}$ from the other language such that $v_1\le v_1'\le v_2$. Thus, $v_1'=v_1x$ and $v_2=v_1xy$, for some nonempty strings $x$ and $y$. We can now go through the cycle and define an infinite tower of prefixes $v_1, v_1x, v_1xy, v_1xyx, \ldots$, which is a contradiction. Thus, the bound on $r$ is at most the number of states of the product automaton of the minimal DFAs.
  \end{proof}

  We now show that Lemma~\ref{lem001} does not hold for non-regular languages.
  
  \begin{example}
    Let $K=\{a,b\}^*a$ and $L=\{ a^m (ba^*)^nb \mid m > n \ge 0\}$ be two languages. Note that $K$ is regular and $L$ is non-regular context-free. The languages are disjoint, since the strings of $K$ end with $a$ and the strings of $L$ with $b$.

    For any $k\geq 1$, the strings $w_{2i+1}=a^k(ba)^i\in K$ and $w_{2(i+1)}=a^k(ba)^ib\in L$, for $i=0,1,\dots,k-1$, form a tower of prefixes between $K$ and $L$ of height $2k$.
		
    On the other hand, let $w_1, w_2, \ldots$ be a tower of prefixes between the languages $K$ and $L$. Without loss of generality, we may assume that $w_1$ belongs to $L$. Then $a^kb$ is a prefix of $w_1$, for some $k\ge 1$. It is not hard to see that $|w_i|_b<|w_{i+2}|_b$ holds for any $w_i\le w_{i+1} \le w_{i+2}$ with $w_i, w_{i+2}$ in $L$ and $w_{i+1}$ in $K$. As any string of $L$ with a prefix $a^kb$ can have at most $k$ occurrences of letter $b$, the tower cannot be infinite.
  \end{example}

\subsection{The upper and lower bounds}
  We first investigate towers of prefixes for regular languages represented by DFAs. 

  \begin{thm}\label{thm:dfas}
    Let $A$ and $B$ be two DFAs with $m$ and $n$ states that have no infinite tower of prefixes. Then the height of a tower of prefixes between $A$ and $B$ is at most $\frac{mn}{2}$, and the bound is tight.
  \end{thm}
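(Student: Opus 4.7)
The plan is to bound the height by a pigeonhole argument in the product automaton $A\times B$, and then to match it by an explicit construction.

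First, I would consider the product DFA $A\times B$ with state set $Q_A\times Q_B$. For a tower of prefixes $(w_i)_{i=1}^r$, let $s_i=(\delta_A(q_A,w_i),\delta_B(q_B,w_i))$. Since no infinite tower exists, we must have $L(A)\cap L(B)=\emptyset$ (otherwise the constant sequence $w,w,\dots$ is an infinite tower of prefixes). Hence each $s_i$ lies in exactly one of the two disjoint ``tower regions'' $R_A:=F_A\times(Q_B\setminus F_B)$ (when $w_i\in L(A)$) or $R_B:=(Q_A\setminus F_A)\times F_B$ (when $w_i\in L(B)$), and the regions alternate along~$i$.

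Next, reusing the cycle-pumping argument of Lemma~\ref{lem001}, I would show that the states $s_i$ at odd positions are pairwise distinct, and likewise at even positions. Indeed, if $s_i=s_j$ with $i<j$ of the same parity, write $w_j=w_i y$ and $w_{i+1}=w_i z$ with $z\le y$; iterating the product cycle $s_i\xrightarrow{y}s_i$ produces the infinite tower of prefixes $w_i,\,w_iz,\,w_iy,\,w_iyz,\,w_iy^2,\ldots$, contradicting the hypothesis. Setting $a=|F_A|$ and $b=|F_B|$, the counts $X=\#\{i:s_i\in R_A\}$ and $Y=\#\{i:s_i\in R_B\}$ then satisfy $X\le a(n-b)$ and $Y\le(m-a)b$, while the alternation gives $X+Y=r$ and $|X-Y|\le 1$. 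Combining,
\[
\tfrac{r^2-1}{4}\le XY\le [a(m-a)]\cdot[b(n-b)]\le\tfrac{m^2}{4}\cdot\tfrac{n^2}{4},
\]
where the last step uses the elementary inequality $k(N-k)\le(N/2)^2$. A short parity-case check on $r^2\le m^2n^2/4+1$ then yields $r\le mn/2$.

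For tightness, the plan is to exhibit a pair of DFAs with $m$ and $n$ states realising a tower of prefixes of height exactly $mn/2$. A natural ansatz takes a binary alphabet $\{a,b\}$ where letter $a$ drives only $A$'s cyclic structure and $b$ only $B$'s, with exactly half the states accepting in each, so that both quadratic factors $a(m-a)$ and $b(n-b)$ are saturated at $m^2/4$ and $n^2/4$. The hard part will be making the construction actually attain the upper bound while simultaneously keeping $L(A)$ and $L(B)$ disjoint and ruling out any infinite tower of prefixes: a fully symmetric cyclic product induces pumping cycles and hence infinite towers, so the construction must ``break'' the cycle after the saturating word has visited all $mn/2$ alternating region-states, for instance by routing both automata into an absorbing sink once that word has been consumed.
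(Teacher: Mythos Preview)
Your upper-bound argument is correct and essentially matches the paper's: both work in the product automaton, identify the two alternation regions $R_A=F_A\times(Q_B\setminus F_B)$ and $R_B=(Q_A\setminus F_A)\times F_B$, use the pumping argument of Lemma~\ref{lem001} to force distinctness of the product states within each parity class, and finish with an AM--GM step. The only cosmetic difference is that the paper bounds $\min(|R_A|,|R_B|)\le mn/4$ directly and concludes from alternation, whereas you bound the product $XY\le a(m-a)\,b(n-b)$; the two are equivalent.

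The gap is in tightness. You give only an ansatz and explicitly flag the ``hard part'' without resolving it; moreover, your proposed fix of breaking the product cycle by routing into an absorbing sink adds a state to each automaton, so the resulting machines have $m+1$ and $n+1$ states rather than $m$ and $n$, and the bound is no longer met. The paper takes a different, concrete route that avoids designing new automata altogether: it reuses the NFAs $A_m$ and $B_m$ of Theorem~\ref{thm:exp}, whose tower of prefixes already has height $2^{m+1}$. The minimal DFA for $L(B_m)=\Sigma_m^*b$ has two states, and the paper shows that the minimal DFA for $L(A_m)$ has exactly $2^{m+1}$ states, by checking that in the subset construction every subset of $\{0,1,\dots,m\}$ is reachable (via a suitable word $a_{i_1}a_{i_2}\cdots a_{i_k}$) and that any two distinct subsets are distinguished by some letter $a_i$. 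With these DFA sizes the bound reads $\tfrac{2^{m+1}\cdot 2}{2}=2^{m+1}$, which the existing tower matches exactly. Note that this establishes tightness only for the pairs $(2^{m+1},2)$, not for arbitrary $(m,n)$ as your ansatz was aiming for.
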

  \begin{proof}
    Let $A=(Q_A,\Sigma,\delta_A,q_A,F_A)$ and $B=(Q_B,\Sigma,\delta_B,q_B,F_B)$, and let $X=F_A\times (Q_B\setminus F_B)$ and $Y=(Q_A\setminus F_A)\times F_B$. 
    
    We first show that $\min(|X|,|Y|)\leq \frac{mn}4$, where the equality holds only if $|X|=|Y|$. Let $0\leq \alpha,\beta \leq 1$ be such that $|F_A|=\alpha |Q_A|$ and $|F_B|=\beta |Q_B|$. Then $|X|=\alpha (1-\beta) mn$ and $|Y|=\beta (1-\alpha) mn$. The well-known inequality between the arithmetic and geometric mean yields $\frac 14 [\alpha+(1-\alpha) +\beta +(1-\beta)]=\frac 12\geq \sqrt[4]{\alpha(1-\alpha)\beta(1-\beta)}$. This implies the claim.
    
    If there is no infinite tower between $A$ and $B$, any tower of prefixes $(w_i)_{i=1}^{r}$ has to alternate between the states of $X$ and $Y$ in the product automaton $A\times B$, as in the proof of Lemma \ref{lem001}. If $r> \frac{mn}2$, then the claim above implies that at least one state is repeated, and we obtain an infinite tower of prefixes, which is a contradiction.
      
    To show that the bound is tight, we consider the automata $A_m$ and $B_m$ from the proof of Theorem~\ref{thm:exp} with a tower of prefixes of height $2^{m+1}$. The minimal DFA for $L(B_m)$ has two states. We claim that the minimal DFA for $L(A_m)$ has $2^{m+1}$ states. Indeed, consider the DFA obtained from $A_m$ by the standard subset construction. Let $Z$ and $Z'$ be two different subsets of states, and let $i\in Z\setminus Z'$. Then $a_{i}$ is accepted from $Z$, but not from $Z'$.    
    It remains to show that all subsets are reachable from the initial state $I=\{0,1,\ldots,m\}$. Recall that the $a_k$-transition in $A_m$ leads from $k$ to $j$ for all $j<k$, that there is a self-loop in $j$ under $a_k$ if $j > k$, and that the $a_k$-transition is not defined in $j$ if $j<k$. Therefore, if $k\in Z'$, the $a_k$-transition leads from $Z'$ to $Z''$, where $\{0,1,\dots,k-1\}\subseteq Z''$, $k\notin Z''$, and, for $j>k$, $j\in Z''$ if and only $j\in Z'$. Let $Z=I\setminus \{i_1,i_2,\dots,i_k\}$, where $i_1>i_2>\cdots>i_k$. It is straightforward to verify that the string $a_{i_1}a_{i_2}\cdots a_{i_k}$ leads from $I$ to $Z$.         
  \end{proof}

  Compared to towers (of subsequences), Theorem~\ref{thm:expdfa} shows that there exist towers (of subsequences) of exponential height with respect to the number of states of the input DFAs, while Theorem~\ref{thm:dfas} gives a quadratic bound on the height of towers of prefixes with respect to the number of states of the input DFAs. This shows an exponential difference between the height of towers (of subsequences) and the height of towers of prefixes.
  
  What is the situation for NFAs? An immediate consequence of the NFA-to-DFA transformation and Theorem~\ref{thm:dfas} give the following tight bound.
  \begin{cor}\label{cor:nfas}
    Given two NFAs with at most $m$ and $n$ states and with no infinite tower of prefixes. The height of a tower of prefixes is at most $2^{m+n-1}$. Moreover, the lower bound is at least $2^{m+n-2}$ for infinitely many pairs $(m,n)$.
  \end{cor}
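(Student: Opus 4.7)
The proof is essentially a two-line reduction to Theorem~\ref{thm:dfas} for the upper bound, combined with the existing exponential construction for the lower bound.

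For the upper bound, the plan is to pass to the equivalent minimal DFAs via the standard subset construction. If $A$ and $B$ are NFAs with $m$ and $n$ states, then the minimal DFAs $A'$ and $B'$ for $L(A)$ and $L(B)$ have at most $2^m$ and $2^n$ states respectively. Since the notion of a tower of prefixes depends only on the languages, any tower of prefixes between $A$ and $B$ is a tower of prefixes between $A'$ and $B'$, and conversely there is no infinite tower of prefixes between $A'$ and $B'$. Applying Theorem~\ref{thm:dfas} gives the bound $\tfrac{1}{2}\cdot 2^m\cdot 2^n=2^{m+n-1}$.

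For the lower bound, I would invoke the construction of Theorem~\ref{thm:exp}. For every $k\ge 0$, that theorem exhibits an NFA $A_k$ with $k+1$ states and a (D)FA $B_k$ with two states over $\Sigma_k=\{b,a_1,\ldots,a_k\}$, such that the prefixes of $u_kb$ form a tower of height $2^{k+1}$, and there is no infinite tower between $L(A_k)$ and $L(B_k)$. The key observation, which one needs to verify by inspecting that proof, is that this tower is in fact a tower of \emph{prefixes} (every $w_i$ is a prefix of $u_kb$, hence of $w_{i+1}$), so the non-existence of an infinite tower of subsequences trivially implies the non-existence of an infinite tower of prefixes as well. Instantiating with $m':=k+1$ and $n':=2$ yields two NFAs of sizes $m'$ and $n'$ with a tower of prefixes of height $2^{k+1}=2^{m'+n'-2}$, which matches the claimed lower bound for all pairs $(m',n')$ with $n'=2$ and $m'\ge 1$, i.e.\ infinitely many pairs.

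There is no serious obstacle here: the upper bound is a direct application of the DFA case, and the only mild subtlety in the lower bound is pointing out that the exponential tower already produced in Theorem~\ref{thm:exp} consists of prefixes, so it serves simultaneously as a tower of prefixes witness. If one prefers a lower bound valid for infinitely many pairs with both coordinates growing, Theorem~\ref{thm:2exp} can be used instead to give $2^{2m}$ for the pair $(m+1,m+1)$, again matching $2^{m'+n'-2}$.
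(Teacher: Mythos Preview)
Your proof is correct and matches the paper's approach exactly: the upper bound is the subset construction followed by Theorem~\ref{thm:dfas}, and the lower bound is the observation that the tower in Theorem~\ref{thm:exp} is already a tower of prefixes, instantiated at $(m',n')=(k+1,2)$.

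One small caveat on your optional final remark: the tower in Theorem~\ref{thm:2exp} has height $2^{2m}-2^m+2$, not $2^{2m}$, so for the pair $(m+1,m+1)$ it falls just short of $2^{m'+n'-2}=2^{2m}$ rather than matching it; this does not affect your main argument.
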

  \begin{proof}
    The automata $A_{m-1}$ and $B_{m-1}$ in Theorem~\ref{thm:exp} have $m$ and $2$ states, respectively, and the tower is a tower of prefixes of height $2^{m}$.
  \end{proof}

  Unlike the bounds on towers (of subsequences), which depend on both the number of states and the size of the alphabet, the bounds on towers of prefixes depend only on the number of states. A natural question is whether there are any requirements on the size of the alphabet in case of automata with exponentially high towers of prefixes. The following corollary shows that the alphabet can be binary and the tower is still more than polynomial in the number of states. 	
  \begin{cor}
    There exist infinitely many pairs $(n_1,n_2)$ of integers for which there are binary NFAs with $n_1$ and $n_2$ states with no infinite tower of prefixes and with a tower of prefixes of a superpolynomial height with respect to $n_1+n_2$.
  \end{cor}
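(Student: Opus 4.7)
The plan is to start from the pair $A_m, B_m$ of Theorem~\ref{thm:exp}, which has $m{+}1$ and $2$ states over the $(m{+}1)$-letter alphabet $\Sigma_m$ and admits a tower of prefixes of height $2^{m+1}$, and reduce its alphabet to $\{0,1\}$ via a uniform binary encoding, paying only a polynomial price in the number of states.

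Concretely, fix $k=\lceil \log_2(m{+}1)\rceil$ and an injective map $\phi\colon\Sigma_m\to\{0,1\}^k$, extended to a length-multiplying morphism on $\Sigma_m^*$. Construct binary NFAs $A_m'$ and $B_m'$ that, from every original state $p$, simulate the reading of an original letter $a$ by walking down a shared binary decision tree of depth $k$ rooted at $p$, and then perform the $a$-transition of $A_m$ (resp.\ $B_m$) from the leaf corresponding to $\phi(a)$. Only the original (non-tree) states are declared accepting. Each such tree contributes $O(2^k)=O(m)$ fresh states, so $A_m'$ has $n_1=O(m^2)$ states and $B_m'$ has $n_2=O(m)$ states.

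The key verification is that $L(A_m')=\phi(L(A_m))$ and $L(B_m')=\phi(L(B_m))$ (this follows because tree nodes are non-accepting, so any accepting computation reads a multiple of $k$ bits, namely an element of $\phi(\Sigma_m^*)$), and that $\phi$ respects the prefix order letter-for-letter. Hence the image of the tower from Theorem~\ref{thm:exp} under $\phi$ is a tower of prefixes of height $2^{m+1}$ between $L(A_m')$ and $L(B_m')$; conversely, any infinite such tower would lie in $\phi(\Sigma_m^*)$, and, pulled back under $\phi^{-1}$, would yield an infinite tower of prefixes between $L(A_m)$ and $L(B_m)$, contradicting Theorem~\ref{thm:exp}. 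Taking $n_1=\Theta(m^2)$ and $n_2=\Theta(m)$ for each $m$ gives infinitely many pairs with tower height $2^{m+1}=2^{\Omega(\sqrt{n_1+n_2})}$, which is superpolynomial in $n_1+n_2$.

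The only delicate point I anticipate is designing the binary simulation so that accepting computations correspond exactly to $\phi$-encoded words; once the tree's internal states are kept non-accepting and the tree is genuinely deterministic in reading the $k$-bit codeword, everything else is routine bookkeeping.
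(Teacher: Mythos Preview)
Your proposal is correct and follows essentially the same approach as the paper: encode the $(m{+}1)$-letter alphabet in binary and observe that the prefix relation, the tower, and the absence of an infinite tower survive the encoding. The only difference is in bookkeeping---the paper subdivides each \emph{transition} into a path of $O(\log m)$ fresh states (yielding $n_1=O(m^2\log m)$, $n_2=O(m\log m)$ and height $2^{\Omega(\sqrt{n}/\log n)}$), whereas your per-\emph{state} decision tree shares prefixes among outgoing transitions and gives the marginally sharper $n_1=O(m^2)$, $n_2=O(m)$ with height $2^{\Omega(\sqrt{n})}$.
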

  \begin{proof}
    The property of being a tower of prefixes is preserved if the alphabet is encoded in binary. The binary code of each letter has length at most $\log m$ for an alphabet of cardinality $m$. Therefore, every transition under an original letter can be replaced by a path with at most $\log m$ new states. 
    
    Consider again the automata $A_m$ and $B_m$ from Theorem~\ref{thm:exp}. Every automaton $A_m$ has $m+1$ states and $m(m+1)$ transitions, and every automaton $B_m$ has two states and $m+2$ transitions. Encoding every letter in binary results in automata with $n_1 = O(m^2\log m)$ and $n_2 = O(m\log m)$ states, respectively, and a tower of prefixes of height $2^{m+1} = 2^{\Omega\left(\frac{\sqrt{n_1+n_2}}{\log(n_1+n_2)}\right)}$.
  \end{proof}

\subsection{The complexity analysis}
  In the rest of the paper, we study the complexity of the problem whether there exists an infinite tower of prefixes between two (nondeterministic) finite automata accepting disjoint languages.
  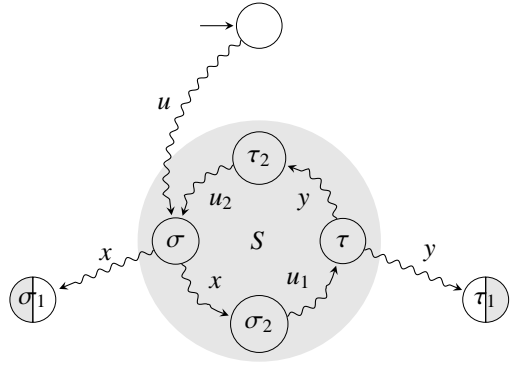
\begin{figure}[t]
    \centering
    \begin{tikzpicture}[->,>=stealth,shorten >=1pt,auto,node distance=1.1cm,
      state/.style={circle,minimum size=6mm,very thin,draw=black,initial text=}]
        \node[]       (1) {$S$};
        \node[state]  (2) [left of=1]  {$\sigma$};
        \node[state]  (3) [right of=1] {$\tau$};
        \node[state]  (4) [below of=1]  {$\sigma_2$};
        \node[state]  (5) [above of=1] {$\tau_2$};
        \path
          (2) edge[-stealth,decoration={snake,amplitude=.4mm,segment length=2mm,post length=1.9mm},decorate,bend right] node {$x$} (4)
          (4) edge[-stealth,decoration={snake,amplitude=.4mm,segment length=2mm,post length=1.9mm},decorate,bend right] node {$u_1$} (3)
          (3) edge[-stealth,decoration={snake,amplitude=.4mm,segment length=2mm,post length=1.9mm},decorate,bend right] node {$y$} (5)
          (5) edge[-stealth,decoration={snake,amplitude=.4mm,segment length=2mm,post length=1.9mm},decorate,bend right] node {$u_2$} (2) ;
        \begin{pgfonlayer}{background}
          \filldraw [line width=4mm,black!10] (1 -| 1) ellipse (1.4cm and 1.4cm);
        \end{pgfonlayer}
        \node[] (6) [left of=2]  {};
        \node[] (7) [right of=3] {};
        \node[] (8) [below left of=6]  {};
        \node[state,semicircle,draw,semicircle,rotate=90,scale=.5,fill=black!10,anchor=south] at (8) {};
        \node[state,semicircle,draw,semicircle,rotate=-90,scale=.5,anchor=south] at (8) {};
        \node[] (8) [below left of=6]  {$\sigma_1$};
        \node[] (9) [below right of=7] {};
        \node[state,semicircle,draw,semicircle,rotate=-90,scale=.5,fill=black!10,anchor=south] at (9) {};
        \node[state,semicircle,draw,semicircle,rotate=90,scale=.5,anchor=south] at (9) {};
        \node[] (9) [below right of=7] {$\tau_1$};
        \node[state,initial] (i) [above=of 5] {};
        \path
          (2) edge[-stealth,decoration={snake,amplitude=.4mm,segment length=2mm,post length=1.9mm},decorate] node[above] {$x$} (8)
          (3) edge[-stealth,decoration={snake,amplitude=.4mm,segment length=2mm,post length=1.9mm},decorate] node{$y$} (9)
          (i) edge[-stealth,decoration={snake,amplitude=.4mm,segment length=2mm,post length=1.9mm},decorate,bend right] node[above left]{$u$} (2) ;
    \end{tikzpicture}
    \caption{The pattern $(S,\sigma,\sigma_1,\sigma_2,\tau,\tau_1,\tau_2)$}
    \label{fig1}
  \end{figure}
  
  Let $A=(Q_A,\Sigma,\delta_A,q_A,F_A)$ and $B=(Q_B,\Sigma,\delta_B,q_B,F_B)$ be two NFAs. We say that $(S,\sigma,\sigma_1,\sigma_2,\tau,\tau_1,\tau_2)$ is a \emph{pattern} of the automata $A$ and $B$ if $S$ is a nontrivial (it has at least one edge) strongly connected component of the product automaton $A\times B$, and $\sigma$, $\sigma_1$, $\sigma_2$, $\tau$, $\tau_1$, $\tau_2$ are states of the product automaton such that 
   \begin{itemize}
    \itemsep0pt
    \item $\sigma_1\in F_A\times Q_B$ and $\tau_1\in Q_A\times F_B$,
    \item $\sigma, \sigma_2, \tau, \tau_2$ belong to $S$,
    \item states $\sigma_1$ and $\sigma_2$ are reachable from the state $\sigma$ under a common string,
    \item states $\tau_1$ and $\tau_2$ are reachable from the state $\tau$ under a common string, and
    \item the strongly connected component $S$ is reachable from the initial state.
   \end{itemize}
  The definition is illustrated in Figure~\ref{fig1}.
 
  The following theorem provides a characterization for the existence of an infinite tower of prefixes.
  \begin{thm}\label{patern}
    Let $A$ and $B$ be two NFAs such that $L(A)$ and $L(B)$ are disjoint. Then there is a pattern of the automata $A$ and $B$ if and only if there is an infinite tower of prefixes between $A$ and $B$.
  \end{thm}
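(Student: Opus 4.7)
For the $(\Rightarrow)$ direction, the pattern is enough to build the tower explicitly. Fix a word $u$ reaching $\sigma$ from $(q_A,q_B)$ and, by strong connectivity of $S$, fix words $v_\sigma$ from $\sigma_2$ to $\tau$ and $v_\tau$ from $\tau_2$ back to $\sigma$. Setting $w_1=ux$, $w_{2k}=w_{2k-1}\,v_\sigma\,y$ and $w_{2k+1}=w_{2k}\,v_\tau\,x$ gives an infinite sequence in which each $w_i$ is a prefix of $w_{i+1}$; because $\sigma_1\in F_A\times Q_B$ is a nondeterministic image of $\sigma$ under $x$ and $\tau_1\in Q_A\times F_B$ is an image of $\tau$ under $y$, the sequence alternates strictly between $L(A)$ and $L(B)$.

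For $(\Leftarrow)$, start from an infinite tower $(w_i)_{i\ge 1}$ with limit infinite word $w_\infty$, and write $I_A,I_B$ for the alternating index sets, both infinite. Lift each accepting computation to a path $\pi_i$ of $A\times B$. Pigeonhole produces states $\sigma_1\in F_A\times Q_B$ and $\tau_1\in Q_A\times F_B$ arising as the endpoint of $\pi_i$ for infinitely many $i\in I_A$ and $i\in I_B$ respectively. Now run the deterministic subset construction of $A\times B$ on $w_\infty$ and let $R(j)$ be its state after reading the prefix of $w_\infty$ of length $j$. Since $(R(j))_{j\ge 0}$ visits only finitely many subsets of $Q_A\times Q_B$, a further pigeonhole yields subsets $R^\star_A\ni\sigma_1$ and $R^\star_B\ni\tau_1$ each occurring infinitely often in the orbit, together with substrings $u,v$ of $w_\infty$ satisfying $\delta(R^\star_A,u)=R^\star_B$ and $\delta(R^\star_B,v)=R^\star_A$ inside $A\times B$.

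Surjectivity of these set-maps means that the bipartite digraph $G$ on $R^\star_A\cup R^\star_B$, with $u$-edges from $R^\star_A$ and $v$-edges from $R^\star_B$, has minimum in-degree at least one; hence $G$ contains a cycle, which lifts to a nontrivial strongly connected component $S$ of $A\times B$ meeting both $R^\star_A$ and $R^\star_B$. Iterating preimages backwards in $G$ from $\sigma_1$ and from $\tau_1$ must enter cycles; by choosing the recurrent pair $(R^\star_A,R^\star_B)$ and the cycle of $G$ jointly so that both backward chains terminate in the same cycle, I obtain $\sigma\in S\cap R^\star_A$ and $\tau\in S\cap R^\star_B$. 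A pumping choice $x=(uv)^m$ then gives $\{\sigma_1,\sigma_2\}\subseteq\delta(\sigma,x)$ with $\sigma_2\in S$ (one nondeterministic branch follows the backward chain to $\sigma_1$, another stays on the $S$-cycle), and a symmetric $y=(vu)^{m'}$ handles $\tau,\tau_1,\tau_2$, finishing the pattern.

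The main obstacle is the joint coordination in the previous step: ensuring that the backward chains from $\sigma_1$ and $\tau_1$ enter a common SCC of $G$, so that $\sigma$ and $\tau$ sit in one SCC of $A\times B$. I expect this to follow from the alternating character of the tower, which forces the subset-automaton orbit to swing between $A$- and $B$-accepting subsets along a single infinite word $w_\infty$; formalising this interconnection (either by an extra Ramsey-type pigeonhole on the accepting positions or by selecting a terminal SCC of $G$ that is backward-reachable from both $\sigma_1$ and $\tau_1$) is the delicate part of the argument.
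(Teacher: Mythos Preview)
Your $(\Rightarrow)$ direction is correct and essentially the same as the paper's.

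For $(\Leftarrow)$ you have the right setup (subset construction on $A\times B$, pigeonhole to find recurring subsets containing states of $F_A\times Q_B$ and $Q_A\times F_B$), but the ``main obstacle'' you flag is a genuine gap, and the argument as written does not close it. There is no reason, in an arbitrary functional digraph with in-degree~$\ge 1$, why the backward orbits from two prescribed vertices should land in the \emph{same} cycle; your suggestion to ``choose the recurrent pair and the cycle jointly'' or to invoke a Ramsey-type argument is not substantiated, and the alternating nature of the tower alone does not force the two backward chains into a common SCC of $G$.

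What you are missing is a structural fact specific to the product automaton: every reachable state of $\det(A\times B)$ is of the form $\delta_A(q_A,w)\times\delta_B(q_B,w)$, i.e.\ a \emph{Cartesian product} of subsets. Hence if your recurrent subset $R^\star$ (say $R^\star_A$ in your notation) contains $(s_1,t_1)$ and $(s_2,t_2)$, it automatically contains the mixed pair $(s_2,t_1)$. The paper exploits exactly this: it first finds $(s_1,t_1)\in X$ with a $(z_Xz_Y)^{k_1}$-loop reaching a $Q_A\times F_B$ state, and $(s_2,t_2)\in X$ with a $(z_Xz_Y)^{k_2}$-loop reaching (after an extra $z_X$) an $F_A\times Q_B$ state. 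Then the single state $(s_2,t_1)\in X$ has a $(z_Xz_Y)^{k_1k_2}$-loop (so it lies in a nontrivial SCC~$S$) and, reading the two exponents $\ell_1,\ell_2$ separately, reaches a state with $B$-accepting second coordinate and a state with $A$-accepting first coordinate. This lets one take $\sigma=\tau=(s_2,t_1)$, and the whole coordination problem evaporates. Once you add this product observation, your argument can be completed along the paper's lines; without it, the joint-SCC step remains unproven.
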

  \begin{proof}
    First, note that the states of $F_A\times F_B$ are not reachable from the initial state $q_0=(q_A,q_B)$, because the languages are disjoint. Assume that $(S,\sigma,\sigma_1,\sigma_2,\tau,\tau_1,\tau_2)$ is a pattern of the automata $A$ and $B$. Let $u$ denote the shortest string under which state $\sigma$ is reachable from the initial state $(q_A,q_B)$. Let $x$ ($y$ resp.) be a string under which both $\sigma_1$ and $\sigma_2$ ($\tau_1$ and $\tau_2$ resp.) are reachable from $\sigma$ ($\tau$ resp.). Let $u_1$ denote the shortest string under which $\tau$ is reachable from $\sigma_2$, and $u_2$ denote the shortest string under which $\sigma$ is reachable from $\tau_2$, see Figure~\ref{fig1}. Together, we have an infinite tower of prefixes $u (x u_1 y u_2)^* (x + x u_1 y)$.

    To prove the other direction, assume that there exists an infinite tower of prefixes $(w_i)_{i=1}^{\infty}$ between the languages $L(A)$ and $L(B)$. Consider the automaton $det(A\times B)$, the determinization of $A\times B$ by the standard subset construction. A sufficiently long element of the tower defines a path $q_0\xrightarrow{u} X \xrightarrow{z_X} Y \xrightarrow{z_Y} X$ in the automaton $det(A\times B)$, such that $Y$ contains a state $(f_1,q_1)\in F_A\times (Q_B\setminus F_B)$ and $X$ contains a state $(q_2,f_2)\in (Q_A\setminus F_A)\times F_B$. 
    
    Since $Y=\delta_{A\times B}(X,z_X)$ and $X=\delta_{A\times B}(Y,z_Y)$, for every state of $X$, there exists an incoming path from an element of $Y$ labeled by $z_Y$. Similarly, for every state of $Y$, there exists an incoming path from an element of $X$ labeled by $z_X$. Thus, there are infinitely many paths from $X$ to $X$ labeled with $(z_Xz_Y)^+$ ending in state $(q_2,f_2)$. Therefore, there exists a state $(s_1,t_1)\in X$ and integers $k_1$ and $\ell_1$ such that $(s_1,t_1)\xrightarrow{(z_Xz_Y)^{k_1}} (s_1,t_1)\xrightarrow{(z_Xz_Y)^{\ell_1}} (q_2,f_2)$. Similarly, there exists a state $(s_2,t_2)\in X$ and integers $k_2$ and $\ell_2$ such that $(s_2,t_2)\xrightarrow{(z_Xz_Y)^{k_2}} (s_2,t_2)\xrightarrow{(z_Xz_Y)^{\ell_2}z_X} (f_1,q_1)$.

    As $q_0\xrightarrow{u} (s_1,t_1)$ and $q_0\xrightarrow{u} (s_2,t_2)$, state $(s_2,t_1)$ belongs to $X$. Moreover, $(s_2,t_1) \xrightarrow{(z_Xz_Y)^{k_1k_2}} (s_2,t_1)$ forms a cycle. Thus, state $(s_2,t_1)$ appears in a nontrivial strongly connected component of the automaton $A\times B$. From the above we also obtain $(s_2,t_1)\xrightarrow{(z_Xz_Y)^{\ell_1}} (s_3,f_2)$, for an $s_3$ in $Q_A$, which exists because $s_2\xrightarrow{(z_Xz_Y)^{k_2}} s_2 $ is a cycle in $A$. Similarly, we obtain $(s_2,t_1)\xrightarrow{(z_Xz_Y)^{\ell_2}z_X} (f_1,t_3)$, for a $t_3$ in $Q_B$. Notice that $(f_1,t_3)\in F_A\times (Q_B\setminus F_B)$ and $(s_3,f_2)\in (Q_A\setminus F_A)\times F_B$. 
    
    Thus, we have a pattern with $\sigma=\tau=(s_2,t_1)$, $\sigma_1=(f_1,t_3)$, $\tau_1=(s_3,f_2)$, and with $\sigma_2$ and $\tau_2$ being states of the cycle $\sigma \xrightarrow{(z_Xz_Y)^{k_1k_2}}\sigma$ satisfying $\sigma \xrightarrow{x} \sigma_2 \xrightarrow{u_1} \tau \xrightarrow{y} \tau_2 \xrightarrow{u_2} \sigma$, where $x=(z_Xz_Y)^{\ell_2}z_X$ and $y=(z_Xz_Y)^{\ell_1}$ have been obtained above, and $u_1$ and $u_2$ can be chosen as $u_1=z_Y(z_Xz_Y)^{\ell_2k_1k_2-\ell_2-1}$ and $u_2=(z_Xz_Y)^{\ell_1k_1k_2-\ell_1}$.
  \end{proof}

  We now use the previous result to study the complexity of the problem of {\em the existence of an infinite tower of prefixes}. The problem asks whether, given two (nondeterministic) finite automata accepting disjoint languages, there exists an infinite tower of prefixes between their languages.
  \begin{thm}\label{nl-complete}
    The problem of the existence of an infinite tower of prefixes is NL-complete.
  \end{thm}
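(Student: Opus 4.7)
The plan is to prove both containment in NL and NL-hardness.

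For NL containment I would use the characterization of Theorem~\ref{patern}: given NFAs $A$ and $B$ accepting disjoint languages, it suffices to decide whether the product automaton $A\times B$ admits a pattern. I would nondeterministically guess the six states $\sigma,\sigma_1,\sigma_2,\tau,\tau_1,\tau_2$; each is a pair of states and fits in $O(\log(|Q_A||Q_B|))$ bits, so they occupy only logarithmic space (the component $S$ itself need not be stored, as it is pinned down by any of its witnesses $\sigma,\sigma_2,\tau,\tau_2$). I would then verify each defining condition of a pattern in NL: the local checks $\sigma_1\in F_A\times Q_B$ and $\tau_1\in Q_A\times F_B$; reachability of $\sigma$ from the initial state $(q_A,q_B)$ of $A\times B$; and pairwise mutual reachability of $\sigma,\sigma_2,\tau,\tau_2$, together with at least one internal edge in their common SCC witnessing nontriviality. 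The ``common string'' conditions reduce to $s$-$t$ reachability of $(\sigma_1,\sigma_2)$ from $(\sigma,\sigma)$ in the synchronous self-product of $A\times B$ (whose edges drive both coordinates by the same input letter), and analogously for $(\tau_1,\tau_2)$ from $(\tau,\tau)$. Each subcheck is directed-graph reachability on a graph of polynomial size, hence in NL. Since NL is closed under conjunction and (by Immerman--Szelepcs\'enyi) under complement, the whole procedure is in NL.

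For NL-hardness I would log-space reduce from the $s$-$t$ reachability problem for directed graphs, which is NL-complete. Given an instance $(G,s,t)$ I would construct NFAs $A$ and $B$ with $L(A)\cap L(B)=\emptyset$ such that $t$ is reachable from $s$ in $G$ if and only if $A\times B$ admits a pattern---equivalently, by Theorem~\ref{patern}, an infinite tower of prefixes between $L(A)$ and $L(B)$ exists. The idea is to embed the edges of $G$ into the product automaton and attach a small fixed gadget that closes a nontrivial strongly connected component containing the required accepting-coordinate states $\sigma_1\in F_A\times Q_B$ and $\tau_1\in Q_A\times F_B$ exactly when an $s$-to-$t$ path in $G$ can be concatenated with the gadget's return path.

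\textbf{Main obstacle.} The delicate direction is NL-hardness, where the gadget must be engineered so that (i) the languages $L(A)$ and $L(B)$ stay disjoint, (ii) the required SCC in $A\times B$ arises precisely when an $s$-$t$ path exists in $G$, and (iii) no spurious cycle of $A\times B$ unrelated to any such path can accidentally produce a pattern; all while being describable in log-space from $(G,s,t)$. The NL upper bound is, by contrast, largely mechanical once Theorem~\ref{patern} is available.
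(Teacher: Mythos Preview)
Your approach matches the paper's exactly on the NL upper bound: guess the six states of a pattern and verify the conditions of Theorem~\ref{patern} by a constant number of reachability checks (the paper spells this out as Algorithm~\ref{alg2}, including your ``synchronous self-product'' idea for the common-string conditions). Your invocation of Immerman--Szelepcs\'enyi is not even needed, since the problem as stated is the \emph{existence} of an infinite tower, i.e., the positive side.

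Where your proposal stops short is the hardness reduction: you correctly identify the obstacle but leave the gadget unspecified. The paper's construction is simpler than what you seem to anticipate and sidesteps all three of your worries at once. From $(G,s,t)$ it builds $A$ by taking the vertices of $G$ as states, giving every edge a \emph{unique} fresh label, adding a new initial-and-accepting state $q_0$ with transitions $q_0\xrightarrow{a}s$ and $t\xrightarrow{b}q_0$; the automaton $B$ is a fixed two-state DFA with $i\xrightarrow{a}j$, $j\xrightarrow{b}i$, self-loops on $j$ under all edge-labels, initial state $i$, accepting state $j$. Then $L(A)\cap L(B)=\emptyset$ because every word of $L(A)$ ends in $b$ while every word of $L(B)$ ends in a non-$b$ letter (or is $a$); and there is an infinite tower of prefixes iff $t$ is reachable from $s$, since the only way to close a cycle through $(q_0,i)$ in $A\times B$ is to traverse an $s$-to-$t$ path in $G$ between the $a$ and $b$ moves. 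The unique edge labels make $A$ deterministic and kill any ``spurious'' synchronisation in the product, which is exactly what you were worried about in point~(iii).
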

  \begin{proof}
    The problem is in NL, since it is sufficient to guess the six states $\sigma$, $\sigma_1$, $\sigma_2$, $\tau$, $\tau_1$, $\tau_2$ of a pattern and then it takes several reachability tests to verify the guess, see Algorithm~\ref{alg2}. This can be done in nondeterministic logarithmic space~\cite{sipser}.
    \begin{algorithm}
      \caption{Checking the existence of a pattern of automata $A$ and $B$ (symbol $\rightsquigarrow$ stands for reachability)}
      \label{alg2}
      \begin{algorithmic}[1]
        \State Guess six states $\sigma$, $\sigma_1$, $\sigma_2$, $\tau$, $\tau_1$, $\tau_2$ such that $\sigma_1\in F_A\times Q_B$ and $\tau_1\in Q_A\times F_B$;
        \State Check $\sigma\rightsquigarrow \sigma_2 \rightsquigarrow \tau \rightsquigarrow \tau_2 \rightsquigarrow \sigma$; 
          \Comment{$\sigma, \sigma_2, \tau, \tau_2$ belong to the same SCC}
        \State Check reachability of $\sigma$ from the initial state of $A\times B$
          
        \State $k_1 := \sigma$; \qquad $k_2 := \sigma$; 
          \Comment{$\sigma_1$ and $\sigma_2$ are reachable from $\sigma$ under a common string}
        \Repeat { guess} $a\in\Sigma$;         
          \State $k_1 := \delta(k_1,a)$;          
          \State $k_2 := \delta(k_2,a)$;         
        \Until {$k_1 = \sigma_1$ and $k_2 = \sigma_2$};
      
        \State $k_1 := \tau$; \qquad $k_2 := \tau$; 
          \Comment{$\tau_1$ and $\tau_2$ are reachable from $\tau$ under a common string}
        \Repeat { guess} $a\in\Sigma$;         
          \State $k_1 := \delta(k_1,a)$;          
          \State $k_2 := \delta(k_2,a)$;         
        \Until {$k_1 = \tau_1$ and $k_2 = \tau_2$};
        \State\Return 'yes';
      \end{algorithmic}
    \end{algorithm}
    
    To prove NL-hardness we reduce the reachability problem. The {\em reachability problem\/} asks, given a directed graph $G$ and vertices $s$ and $t$, whether $t$ is reachable from $s$. The problem is NL-complete~\cite[Theorem~16.2]{papadimitriou}. 
    
    Let $G=(V,E,s,t)$ be an instance of the reachability problem. We construct the automaton $A=(V,\Sigma,\delta_A,q_0,\{q_0\})$, where $\delta_A$ is defined as the relation $E$ where every transition is given a unique label. In addition, there are two more transitions $\delta_A(q_0,a)=s$ and $\delta_A(t,b)=q_0$, for some fresh letters $a$ and $b$. Thus, $\Sigma=\{a,b\}\cup\{\ell_i \mid 1\le i\le |E|\}$. The automaton $B$ is depicted in Figure~\ref{figNL4}. It is not hard to see that there exists an infinite tower of prefixes if and only if $t$ is reachable from $s$.
    \begin{figure}
      \centering
      \begin{tikzpicture}[->,>=stealth,shorten >=1pt,auto,node distance=1.6cm,
        state/.style={circle,minimum size=6mm,very thin,draw=black,initial text=}]
          \node[state]            (2) [] {$s$};
          \node[]                 (3) [right of=2] {};
          \node[state]            (4) [below right of=3] {$t$};
          \path
            (2) edge[style={decorate, decoration={snake,post length=1.9mm}}] node{?} (4) ;
          \begin{pgfonlayer}{background}
            \filldraw [line width=4mm,rounded corners,black!10]
              (2.north -| 2.west)  rectangle (4.south -| 4.east);
          \end{pgfonlayer}
          \node[state,initial,accepting]    (1) [below left of=2] {$q_0$};
          \path
            (1) edge[bend left] node{$a$} (2)
            (4) edge[bend left=60] node{$b$} (1) ;

        \node[state,initial]    (10) [right=2cm of 4] {$i$};
        \node[state,accepting]  (20) [right of=10] {$j$};
        \path
          (10) edge[bend left] node{$a$} (20)
          (20) edge[loop above] node{\qquad$\Sigma\setminus\{a,b\}$} (20)
          (20) edge[bend left] node{$b$} (10) ;
      \end{tikzpicture}
      \caption{Automata $A$ and $B$.}
      \label{figNL4}
    \end{figure}
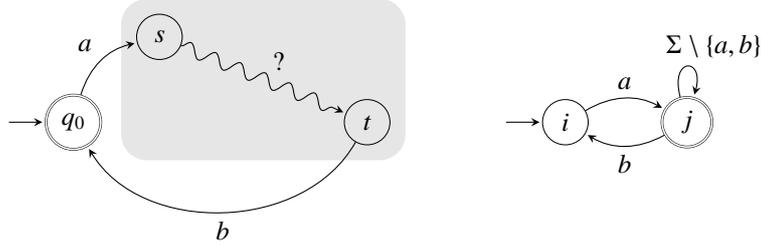
  \end{proof}
  
  \begin{cor}\label{nl-completeCor}
     The problem of the existence of an infinite tower of prefixes for minimal DFAs is NL-complete.
  \end{cor}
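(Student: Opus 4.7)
The plan splits into NL-membership and NL-hardness. NL-membership is immediate from Theorem~\ref{nl-complete}, since every complete minimal DFA is in particular an NFA, so the nondeterministic log-space algorithm of Theorem~\ref{nl-complete} (guessing the states of a pattern and verifying them by reachability tests in the product automaton) applies without change.

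For NL-hardness, I would adapt the log-space reduction from directed reachability used in the proof of Theorem~\ref{nl-complete}. Recall that in that construction the automaton $A$, built on the vertex set of the input graph together with a fresh state $q_0$, already has a unique label on every transition and is therefore deterministic; and the automaton $B$, once completed with a sink state, is a three-state minimal DFA. The remaining issue is that $A$ is not minimal in general: vertices of $G$ that are not reachable from $s$ give unreachable states, and vertices that cannot reach $t$ collapse with the sink. I would patch this by adding, for every vertex $v$ of $G$, two fresh letters: a letter $c_v$ with a transition $q_0 \xrightarrow{c_v} v$, which makes every vertex reachable from the initial state, and a letter $p_v$ with a transition $v \xrightarrow{p_v} q_0$, which lets every vertex reach the accepting state and hence distinguishes it from the sink and from every other vertex. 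In $B$, the new letters would be left undefined, so they send $B$ into its sink and $L(B)$ over the old alphabet is unchanged. It is then routine to check that $A$ and $B$ are minimal DFAs and that $L(A) \cap L(B) = \emptyset$, since any non-empty element of $L(A)$ ends in $b$ or some $p_v$, none of which lies in $L(B)$.

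The main obstacle I anticipate is to verify that this modification preserves the characterization ``$t$ is reachable from $s$ in $G$ if and only if there is an infinite tower of prefixes between $L(A)$ and $L(B)$''. The forward implication is inherited from Theorem~\ref{nl-complete}, because the original tower $\varepsilon, a, awb, awba, awbawb, \dots$ (where $w$ is the label of an $s$-to-$t$ path in $G$) uses only pre-existing transitions. The converse is the delicate point: since every occurrence of a letter $c_v$ or $p_v$ sends $B$ into its sink, an infinite tower of prefixes has a common limit word over the restricted alphabet $\{a,b\} \cup \{\ell_{u,v}\}$, and consequently every $L(A)$-prefix of that limit word must return to $q_0$ via the $b$-transition from $t$, which forces $s$ to reach $t$ in $G$. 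Since the construction is a plain log-space enumeration of vertices, edges, and new letters, this completes the log-space reduction from reachability and establishes NL-hardness of the problem for minimal DFAs.
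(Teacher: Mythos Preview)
Your proposal is correct and follows essentially the same approach as the paper: inherit NL-membership from Theorem~\ref{nl-complete}, and for NL-hardness patch the reduction by adding fresh-letter transitions from $q_0$ to every vertex and from every vertex back to an accepting state, noting that these letters are undefined in $B$ and hence cannot occur in any infinite tower of prefixes. The only cosmetic difference is that the paper routes the outgoing fresh transitions to a new accepting state $f$ rather than back to $q_0$; your variant works just as well since the argument that fresh letters force $B$ into its sink is identical.
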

  \begin{proof}
    It remains to prove NL-hardness for DFAs. Consider the construction from the proof of Theorem~\ref{nl-complete}. We add a new accepting state, $f$, and new transitions under fresh letters from $q_0$ to every node of $G$, and from every node of $G$ to $f$. Transitions under these letters are undefined in $B$. Then $A$ and $B$ are minimal DFAs and there exists an infinite tower of prefixes if and only if $t$ is reachable from $s$.
  \end{proof}

\section{Conclusions}
  We have provided upper and lower bounds on the height of maximal finite towers between two regular languages represented by nondeterministic finite automata in the case there is no infinite tower. Both the upper and lower bound is exponential with respect to the size of the alphabet, which means that the algorithm of Section~\ref{secAlg} needs to handle at least exponential number of regular languages. In addition, we have shown that the exponential lower bound can be obtain not only for NFAs, but also for DFAs. And the observation that the lower bounds for NFAs are formed by sequences of prefixes has motivated the investigation of towers of prefixes.
  
  Finally, note that there is still a gap between the upper and lower bound on the height of towers. Moreover, the question how to efficiently compute a piecewise testable separator is open, including the question whether an efficient algorithm exists.

\subsubsection*{Acknowledgements}
  The authors would like to thank Stefan Borgwardt, Wojciech Czerwi\'nski, Galina Jir\'askov\'a, Markus Kr\"otzsch, and Wim Martens for valuable discussions on various parts of the paper.

\section*{\refname}
\bibliographystyle{elsarticle-harv}
\bibliography{paper}
 
\end{document}